\newtheorem{theorem}{Theorem}
\newtheorem{proposition}{Proposition}
\newtheorem{corollary}{Corollary}
\newtheorem{lemma}{Lemma}
\theoremstyle{remark}
\newcommand{\lb}{\left\lbrace}
\newcommand{\rb}{\right\rbrace}
\newcommand{\tx}{\tilde{x}}
\newcommand{\ty}{\tilde{y}}
\newcommand{\sig}{\sigma}
\begin{document}

\title{Geometric Ergodicity \& Scanning Strategies for Two-Component Gibbs Samplers}

\author{ Alicia A. Johnson\footnote{ {\tt ajohns24@macalester.edu}} and Owen Burbank\footnote{Currently at: Epic Systems, Madison, WI 53711} \\
            Department of Mathematics, Statistics, and Computer Science\\
           Macalester College \\
       }
       \date{Draft: \today} 
\maketitle

\begin{abstract}
In any Markov chain Monte Carlo analysis, rapid convergence of the chain to its target probability distribution is of practical and theoretical importance.  A chain that converges at a geometric rate is {\it geometrically ergodic}.
In this paper, we explore geometric ergodicity for two-component Gibbs samplers which, under a chosen {\it scanning strategy}, evolve by combining one-at-a-time updates of the two components. 
We compare convergence behaviors between and within three such strategies: composition, random sequence scan, and random scan.  Our main results are twofold.  First, we establish that if the Gibbs sampler is geometrically ergodic under any one of these strategies, so too are the others.  Further, we establish a simple and verifiable set of sufficient conditions for the geometric ergodicity of the Gibbs samplers.  Our results are illustrated using two examples. \end{abstract}

\newpage

%










%

\section{Introduction}


Providing a framework for approximately sampling from complicated target probability distributions, Markov chain Monte Carlo (MCMC) methods facilitate
statistical inference in intractable settings.
Consider distribution  $\varpi$ with support on some general state space in $\mathbb{R}^d$.   
Implementation of the foundational Metropolis-Hastings MCMC algorithm for $\varpi$ requires full-dimensional draws from an approximating {\it proposal} distribution.  
However, in settings requiring MCMC, $\varpi$ is typically complicated or $d$ large.  Thus, constructing an appropriate proposal can be  prohibitively difficult.   In such cases, we might instead employ a {\it component-wise} strategy which updates $\varpi$ one variable or sub-block of variables at a time.  

The fundamental component-wise MCMC algorithm, the {\it Gibbs sampler} (GS), evolves by updating each {\it sub-block} or  {\it component} with draws from its conditional distribution given the other components.  
For example, suppose we block the variables of $\varpi$ into two components, $X \in  \mathbb{R}^{d_x}$ and $Y \in  \mathbb{R}^{d_y}$ where $d_x, d_y \ge 1$ and $d_x+d_y=d$.
Let $\pi(x,y)$ denote the corresponding two-component density admitted by $\varpi$ with respect to measure $\mu = \mu_x \times \mu_y$ and 
having  support $ \mathsf{X} \times \mathsf{Y} \subseteq \mathbb{R}^{d_x} \times \mathbb{R}^{d_y}$.   Then the GS Markov chain $\Phi := \{(X^{(0)},Y^{(0)}), (X^{(1)},Y^{(1)}), (X^{(2)},Y^{(2)}), \ldots \}$ evolves by drawing updates of $X$ and $Y$ from the conditional densities $\pi(x|y) := \pi(x,y)/\int \pi(x,y) \mu_x(dx)$ and $\pi(y|x) := \pi(x,y)/\int \pi(x,y) \mu_y(dy)$, respectively.  
Let $P^n((x,y), A)$ denote the corresponding $n$-step {\it Markov transition kernel} where for state $(x,y) \in \mathsf{X}\times \mathsf{Y}$, set $A$ in the Borel $\sigma$-algebra $\mathcal{B}$ on $\mathsf{X}\times\mathsf{Y}$, and $n,i \in Z^{+}$ 
\[
P^n((x,y),A) = \text{Pr}\left(\left(X^{(i+n)},Y^{(i+n)}\right) \in A \; \vline \; \left(X^{(i)},Y^{(i)}\right) = (x,y) \right) \; .
\]
When the GS is Harris ergodic (ie.~$\pi$ irreducible, aperiodic, and Harris recurrent with invariant density $\pi$ \citep{meyn:twee:1993}), $\Phi$ converges to $\varpi$ in {\it total variation distance}.  That is,
$\parallel P^n((x,y), \cdot) - \varpi(\cdot) \parallel : =  \sup_{A \in \mathcal{B}} | P^n((x,y),A) - \varpi(A)| \to 0$  as $n \to \infty$.
Understanding the rate of this convergence is paramount in evaluating the quality of Markov chain output.  To this end,  we say $\Phi$ is {\it geometrically ergodic} if there exist some function $M : \mathsf{X}\times\mathsf{Y} \to \mathbb{R}$ and constant $t \in (0,1)$ for which
\begin{equation}\label{eq:geoerg}
\parallel P^n((x,y), \cdot) - \varpi(\cdot) \parallel \le t^n M(x,y) \;\;\; \text{ for all } (x,y) \in \mathsf{X} \times \mathsf{Y} \; .
\end{equation}
A geometric convergence rate is crucial for several reasons, not least of which is achieving effective simulation results in finite time.  Perhaps most importantly, geometric ergodicity ensures that the same tools used for evaluating estimators in the independent and identically distributed sampling setting also exist in the GS setting.
Specifically, suppose we wish to calculate $E_{\varpi}(g) : =  \iint  g(x,y)\pi(x,y) \mu_x(dx)\mu_y(dy)$ for $g: \mathsf{X}\times\mathsf{Y} \to \mathbb{R}$.   Under Harris ergodicity, the Monte Carlo estimate $\overline{g}_n := \sum_{i=0}^{n-1} g\left(X^{(i)}, Y^{(i)} \right)$  converges to $E_{\varpi}(g)$ with probability one as $n \to \infty$.  
Further, if $E_{\varpi}|g|^{2+\delta} < \infty$ for some $\delta > 0$, geometric ergodicity ensures the existence of a Markov chain Central Limit Theorem (CLT)
\begin{equation}\label{eq:clt}
\sqrt{n}(\overline{g}_n - E_{\varpi}(g)) \stackrel{d}{\to} N\left(0, \sigma_g^2 \right) \;\; \text{ as } n \to \infty 
\end{equation}
for $0 < \sigma_g < \infty$ \citep{jone:2004}.  Under these same conditions, batch means, spectral methods and regenerative
simulation methods provide asymptotically valid Monte Carlo standard errors for $\overline{g}_n$, $\hat{\sigma}_g/\sqrt{n}$ \citep{atch:2011,fleg:jone:2010,hobe:jone:pres:rose:2002,
  jone:hara:caff:neat:2006}.
In turn, we can rigorously assess the accuracy of $\overline{g}_n$ and determine a sufficient simulation length $n$ 
\citep{fleg:hara:jone:2008, fleg:jone:2010}.

Accordingly, our goal is to explore geometric ergodicity for the two-component GS.
Studying this special case is a crucial first step in understanding convergence for GS with multiple components and has many practical applications. For instance, two-component GS serves as the foundation of data augmentation methods and can be used to explore such practically relevant models as the Bayesian general linear model in \cite{john:jone:2010}.
Our work in this GS setting is twofold. 
First, we explore convergence behavior under three different GS {\it scanning strategies}: composition, random sequence scan, and random scan.  For one, we establish that if the GS under any one of these strategies is geometrically ergodic, they all are.  
These results fill in gaps left by 
\cite{john:jone:neat:2013} who explore convergence of component-wise samplers in the general setting.  
Second, we provide a simple set of sufficient conditions for the geometric ergodicity of the GS.  
Such conditions exist for selected model-specific settings (see, for example, \cite{diac:etal:2008b,diac:etal:2008,hobe:geye:1998,jone:hobe:2004, robe:rose:1998c, john:jone:2010}).
However, there is a lack of verifiable conditions that can be utilized in general settings.  For example, though \cite{gema:gema:1984} and \cite{liu:wong:kong:1995} provide general, sufficient conditions for geometric ergodicity, Geman and Geman only consider
GS on finite state spaces and the conditions in Liu et al.~are admittedly difficult to establish in practice.  Further, in their Proposition 1, \cite{tan:jone:hobe:2011} note the need for a {\it drift condition}, but stop short of providing guidance on how to construct such a condition.

We begin in Sections \ref{sec:background} and  \ref{sec:ge} with an overview of GS and geometric ergodicity, respectively.   In Section \ref{sec:GSgeo} we explore geometric convergence  of the GS under different scanning strategies and,  in Section \ref{sec:recipes}, present sufficient conditions for geometric ergodicity.  Finally, we illustrate our results using two examples in Section \ref{sec:ex}.

\section{Background}\label{sec:bg}
\subsection{The Gibbs Sampler}\label{sec:background}
Consider the two-component GS Markov chain $\Phi := \{(X^{(0)},Y^{(0)}), (X^{(1)},Y^{(1)}), (X^{(2)},Y^{(2)}), \ldots \}$.  In general, $\Phi$ evolves by drawing $X$ and $Y$ updates from the full conditional densities $\pi(x|y)$ and $\pi(y|x)$, respectively.  However, the order and frequency of component-wise updates depends on the chosen scanning strategy. Three fundamental strategies are  composition (CGS), random sequence scan (RQGS), and random scan (RSGS).   

First, in every iteration of the CGS, $X$ and $Y$ are updated in a fixed, predetermined order.  Without loss of generality, we assume throughout that $X$ is updated first.  Thus the CGS Markov kernel $P_{CGS}$ admits Markov transition density (Mtd) 
\[
\begin{split}
k_{CGS}((x,y),(x',y')) & = \pi(x'|y)\pi(y'|x') \; .\\
\end{split}
\]
Specifically, 
$P_{CGS}((x,y),A) 
=  \iint\limits_A  k_{CGS}((x,y),(x',y')) \mu_x(dx')\mu_y(dy')$.
RQGS also updates both $X$ and $Y$ in each iteration.  However, the update order is randomly selected according to {\it sequence selection probability} $q\in (0,1)$.  Letting $q$ be the probability of updating $X$ first and $1-q$ the probability of $Y$ first, the RQGS Markov kernel $P_{RQGS,q}$ admits Mtd 
\[
k_{RQGS,q}((x,y),(x',y')) = q\pi(x'|y)\pi(y'|x') + (1-q)\pi(y'|x)\pi(x'|y') \; .
\]  
Thus the RQGS is essentially a mixture of the two possible composition scan GS (that which first updates $X$ and that which first updates $Y$).  Moreover, when $q$ is close to 1, the RQGS behaves much like the CGS which first updates $X$.

Finally, unlike CGS and RQGS, RSGS randomly selects a single component for update in each iteration while fixing the other.  Letting {\it component selection probability} $p \in (0,1)$ be the probability of updating $X$ and $1-p$ the probability of updating $Y$, the RSGS Markov kernel $P_{RSGS,p}$ admits Mtd 
\[
\begin{split}
k_{RSGS,p}((x,y),(x',y')) 
& = p \pi(x'|y)\delta(y'-y) \\
&\hspace{.25in} + (1-p) \pi(y'|x) \delta(x'-x) \\
\end{split}
\]
where $\delta$ is Dirac's delta.  Thus for $p$ close to 1, the RSGS will produce many $X$ updates  but just as many repeat copies of $Y$.
The opposite is true for $p$ close to 0.

Though CGS may be more familiar to readers, there are certain advantages to considering RSGS and RQGS.  For example, it is easy to show that RSGS is reversible with respect to $\pi$ for all $p$ and RQGS is reversible for $q=1/2$.  This, among other advantages, weakens the conditions for a CLT \citep{jone:2004}.  Specifically, \eqref{eq:clt} holds if $g$ has a finite second moment.  In the reversible setting, we can also compare and measure the quality of the GS through Peskun ordering and variance bounding properties \citep{robe:rose:2008}.  


\subsection{Establishing Geometric Ergodicity}\label{sec:ge}
Studying convergence properties of the GS requires a few definitions.  First, let $P$ denote a generic GS Markov kernel (CGS, RQGS, or RSGS) with Mtd $k$.  
We say  a {\it drift condition} holds if there exist some {\it drift function} $V: \mathsf{X} \times \mathsf{Y} \to [1,\infty)$, {\it drift rate} $0 < \lambda <1$, and constant $b < \infty$ such that
\begin{equation}\label{eq:drift}
PV(x,y) \le \lambda V(x,y) + b \;\; \text{ for all } (x,y) \in \mathsf{X} \times \mathsf{Y} 
\end{equation}
where, here applied to a function, $P$ acts as an operator with
\begin{equation}\label{eq:operator}
\begin{split}
PV (x,y) & := E\left[V \left(X^{(t+1)},Y^{(t+1)}\right) \; \vline \; \left(X^{(t)},Y^{(t)}\right) = (x,y) \right] \\
& =  \iint\limits_{\mathsf{X}\times\mathsf{Y}}V(x',y') k((x,y),(x',y')) \mu_x(dx')\mu_y(dy')  \; .\\
\end{split}
\end{equation}
We say $V$ is {\it unbounded off compact sets} if the set $D_d := \{(x,y): V(x,y) \le d\}$ is compact for all $d > 0$.  
Together, if \eqref{eq:drift} holds for some $V$ that is unbounded off compact sets, $\Phi$ will ``drift" toward values of $(x,y)$ for which $V(x,y)$ is small (ie.~close to 1).  (See \cite{jone:hobe:2001} for an in-depth discussion.) The rate of this drift is captured by $\lambda$;  the smaller the $\lambda$, the quicker the drift.  Thus, smaller $\lambda$ are loosely indicative of quicker convergence.  In fact, Markov chain drift  is a sufficient condition for geometric ergodicity.  The following proposition follows 
from Lemma 15.2.8 and Theorems 6.0.1 and 15.0.1 of \cite{meyn:twee:1993}.
\begin{proposition}\label{prop:geoerg}
Suppose the support
of $\pi$ has non-empty interior and Markov chain $\Phi$ is Harris ergodic and Feller, that is, for any open set $O \in \mathcal{B}$
\[
\liminf_{(x_n,y_n) \to (x,y)} P((x_n,y_n), O) \ge P((x,y), O) \hspace{.15in} \text{ for } (x_n,y_n), (x,y) \in \mathsf{X}\times \mathsf{Y} \; .
\]
Then, if drift condition \eqref{eq:drift} holds for some $V$ that is unbounded off compact sets,  $\Phi$ is geometrically ergodic. 
\end{proposition}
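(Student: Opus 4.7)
The plan is to assemble the three ingredients cited from Meyn and Tweedie into the standard ``drift plus petite set'' pathway to geometric ergodicity. Recall that Theorem 15.0.1 of Meyn and Tweedie gives geometric ergodicity in $V$-norm (and hence in total variation, since $V \ge 1$) whenever a drift condition of the form \eqref{eq:drift} holds and the sublevel sets $D_d = \{(x,y) : V(x,y) \le d\}$ are petite. So the task reduces to verifying that the relevant sublevel sets of $V$ are petite for the given kernel $P$.

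First I would observe that because $V$ is unbounded off compact sets, each $D_d$ is compact. The problem therefore shrinks to showing that every compact subset of $\mathsf{X}\times\mathsf{Y}$ is petite for $P$. This is the step that uses the Feller assumption together with the topological hypothesis that the support of $\pi$ has non-empty interior. Concretely, Harris ergodicity supplies $\varpi$-irreducibility of $\Phi$, and since $\varpi$ assigns positive mass to any nonempty open subset of its support, the chain is in particular $\psi$-irreducible for a nontrivial measure $\psi$ supported on an open set. Combined with the Feller property, a standard result (Theorem 6.0.1 of Meyn and Tweedie, which characterizes when all compact sets are petite for a Feller chain with an open small set or irreducibility measure giving positive mass to some open set) then yields that every compact subset of $\mathsf{X}\times\mathsf{Y}$ is petite. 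Lemma 15.2.8 supplies the remaining technical link by promoting compactness of the sublevel sets of the drift function into the petiteness needed to invoke the geometric ergodicity theorem.

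Putting it together: fix a drift function $V$ as in \eqref{eq:drift} that is unbounded off compact sets. The sets $D_d$ are compact, hence petite by the argument above. The drift inequality \eqref{eq:drift} then satisfies the hypothesis of Theorem 15.0.1, yielding constants $\rho \in (0,1)$ and $M(x,y) < \infty$ for which
\[
\| P^n((x,y),\cdot) - \varpi(\cdot) \| \le \rho^n M(x,y)\, ,
\]
which is exactly \eqref{eq:geoerg}.

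The main obstacle I anticipate is the compact-sets-are-petite step, as it requires carefully combining the Feller property with a nontrivial irreducibility measure supported on an open set. The non-empty interior assumption on the support of $\pi$ is doing essential work here, since without it one cannot guarantee that the $\varpi$-irreducibility provided by Harris ergodicity charges an open set, which is what makes the Feller condition bite. Once that technical point is handled, the rest is a direct citation of Theorem 15.0.1.
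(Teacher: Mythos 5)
Your proposal is correct and follows essentially the same route as the paper, which offers no explicit proof but simply cites the three Meyn--Tweedie results you assemble: Theorem 15.0.1 (drift plus petite sublevel sets gives geometric ergodicity), Theorem 6.0.1 (compact sets are petite for a Feller chain whose irreducibility measure charges an open set, which is where the non-empty-interior hypothesis enters), and Lemma 15.2.8 as the remaining technical link. The only slight mis-attribution is at the end: Lemma 15.2.8's role is to localize the global inequality $PV \le \lambda V + b$ into the (V4) form with an indicator of a sublevel set, while the compact-implies-petite step is entirely Theorem 6.0.1's job; this does not affect the validity of your argument.
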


\section{Geometric Ergodicity of the Gibbs Sampler}

Our main goal is to explore geometric ergodicity within and between the CGS, RQGS, and RSGS.  To this end, 
we investigate the impact of GS scanning strategy on achieving geometric convergence in Section \ref{sec:GSgeo} 
and provide sufficient and verifiable conditions for the geometric ergodicity of the GS in Section \ref{sec:recipes}.  

\subsection{Geometric Ergodicity Under Different Scanning Strategies}\label{sec:GSgeo}

GS convergence rates depend on both target distribution $\varpi$ and scanning strategy.
Though different scanning strategies can produce chains with differing asymptotic behaviors, the common building blocks of the CGS, RQGS, and RSGS (namely, the full conditional distributions used for component-wise updates) suggest there should also be links among their convergence properties.  
In this section we address three main questions: (Q1) Does geometric ergodicity of any one of CGS, RQGS, or RSGS guarantee the same for the others?; (Q2) Does geometric ergodicity of the RQGS using sequence selection probability $q$ guarantee the same for all other selection probabilities?; and (Q3) Does geometric ergodicity of the RSGS using component selection probability $p$ guarantee the same for all other selection probabilities?  
Under a  set of conditions on target density $\pi$ and the GS,
the answer to all three of these questions is YES.  
We call this set of conditions {\it Assumption $\mathcal{A}$} which is satisfied if
\begin{enumerate}
\item[(a)]  the CGS, RQGS, and RSGS are Harris ergodic;
\item[(b)] the support of $\pi$ has non-empty interior with respect to $\mu_x\times \mu_y$; and 
\item[(c)]  
for all $(x,y), \; (x_n,y_n) \in \mathsf{X}\times\mathsf{Y}$ 
\begin{equation}\label{eq:c}
\pi\left(y \; \vline \; \liminf_{n\to\infty} x_n \right) \le \liminf_{n\to\infty} \pi\left(y \; \vline \;  x_n \right) 
\hspace{.15in} \text{ and } \hspace{.15in}
\pi\left(x \; \vline \; \liminf_{n\to\infty} y_n \right) \le \liminf_{n\to\infty} \pi\left(x \; \vline \;  y_n \right) \; .
\end{equation}

\end{enumerate}

We believe that {\it Assumption $\mathcal{A}$} does not significantly restrict the usefulness of our results.  First, (a) and (b) are required by Proposition \ref{prop:geoerg} where (a) is also a standard assumption for any Markov chain.  Further, (b) and (c) are satisfied by a wealth of target densities on general state spaces explored by GS in practice.  For example, (b) should hold in most cases where the maximal irreducibility measure for $\Phi$ is Lebesgue and (c) holds when $\pi$ is continuous.  In fact, (c) is simply a sufficient condition for the GS to be Feller.  A proof of this Lemma is given in the appendix.
\begin{lemma}\label{lem:feller}
If \eqref{eq:c} holds for all $(x,y), \; (x_n,y_n) \in \mathsf{X}\times\mathsf{Y}$, then the CGS, RQGS, and RSGS are Feller.
\end{lemma}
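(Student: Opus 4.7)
The plan is to verify the defining inequality $\liminf_{(x_n,y_n)\to(x,y)} P((x_n,y_n), O) \ge P((x,y), O)$ for each of the three GS kernels by pushing the $\liminf$ inside an integral via Fatou's lemma. A preliminary observation reduces the work: $P_{RQGS,q}$ is a convex combination of the two composition-order kernels (one updating $X$ first, one updating $Y$ first), and $P_{RSGS,p}$ is a convex combination of the two single-component update operators. Since $\liminf$ is superadditive under nonnegative linear combinations, it suffices to establish the Feller inequality for (i) a composition-type transition with Mtd $\pi(x'|y)\pi(y'|x')$ and (ii) a single-component transition such as $K_X((x,y), A) := \int \mathbf{1}_A(x',y)\pi(x'|y)\mu_x(dx')$. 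The symmetric variants (CGS updating $Y$ first, and ``update $Y$'' operator) are handled identically using the second inequality in \eqref{eq:c}.

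For (i), I would write $P_{CGS}((x_n,y_n), O) = \iint \mathbf{1}_O(x',y') \pi(x'|y_n) \pi(y'|x') \mu_x(dx')\mu_y(dy')$. The integrand is nonnegative, and the only $n$-dependence is through $\pi(x'|y_n)$. Since $y_n \to y$ implies $\liminf_n y_n = y$, condition \eqref{eq:c} yields the pointwise bound $\liminf_n \pi(x'|y_n) \ge \pi(x'|y)$, and Fatou's lemma then gives $\liminf_n P_{CGS}((x_n,y_n), O) \ge P_{CGS}((x,y), O)$.

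For (ii), the $n$-dependence is more entangled since $K_X((x_n,y_n), O) = \int \mathbf{1}_O(x', y_n)\, \pi(x'|y_n)\, \mu_x(dx')$. Openness of $O$ makes $\mathbf{1}_O$ lower semicontinuous, so $\liminf_n \mathbf{1}_O(x', y_n) \ge \mathbf{1}_O(x',y)$ for every $x'$. Combining this with condition \eqref{eq:c} and the elementary fact that $\liminf (a_n b_n) \ge (\liminf a_n)(\liminf b_n)$ for nonnegative scalar sequences, the pointwise $\liminf$ of the integrand dominates $\mathbf{1}_O(x',y)\pi(x'|y)$. Fatou's lemma then delivers $K_X((x,y), O)$ as the desired lower bound.

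The step I expect to be the main obstacle is case (ii): the Dirac component of the RSGS Mtd forces the indicator $\mathbf{1}_O$ to be evaluated at the moving argument $(x', y_n)$, so condition \eqref{eq:c} by itself is not enough. The key insight is that this moving argument is precisely what is controlled by the lower semicontinuity of $\mathbf{1}_O$ implied by the openness of $O$ in the definition of the Feller property, so the two ingredients combine cleanly. Once this is identified, everything reduces to a routine application of Fatou; no additional regularity beyond \eqref{eq:c} is needed.
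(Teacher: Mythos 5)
Your proof is correct and rests on the same core mechanism as the paper's: a pointwise $\liminf$ bound on the transition density obtained from condition \eqref{eq:c}, followed by an application of Fatou's lemma. The paper only writes out the CGS case and declares the RQGS and RSGS cases ``similar,'' so your handling of the RSGS Dirac component --- where openness of $O$ supplies lower semicontinuity of $\mathbf{1}_O$ at the moving argument $(x',y_n)$, which is then combined with \eqref{eq:c} via superadditivity of $\liminf$ over nonnegative products --- identifies and fills in precisely the extra ingredient that the paper's ``similar'' silently relies on.
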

With these assumptions in place, the following theorem from \cite{john:jone:neat:2013} will be critical in addressing (Q1), (Q2), and (Q3).
\begin{theorem}\label{thm:C}
Under {\it Assumption $\mathcal{A}$},  if CGS is geometrically ergodic, so are the RQGS for all sequence selection probabilities $q$ and the RSGS for all component selection probabilities $p$.
\end{theorem}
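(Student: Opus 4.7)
The plan is to invoke the converse of Meyn--Tweedie's drift theorem to extract a Lyapunov function from the geometric ergodicity of CGS, construct related drift functions for $P_{RQGS,q}$ and $P_{RSGS,p}$, and apply Proposition \ref{prop:geoerg}. By Lemma \ref{lem:feller} together with parts (a)--(b) of Assumption $\mathcal{A}$, the Feller, Harris-ergodicity, and non-empty-interior hypotheses of Proposition \ref{prop:geoerg} are already available for each kernel, so the task reduces to producing a drift function unbounded off compact sets for $P_{RQGS,q}$ and $P_{RSGS,p}$.

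Since $P_{CGS}$ is Harris ergodic and geometrically ergodic, Theorem 15.0.1 of \cite{meyn:twee:1993} supplies $V:\mathsf{X}\times\mathsf{Y}\to[1,\infty)$, unbounded off compact sets, together with $\lambda\in(0,1)$ and $b<\infty$ such that $P_{CGS}V\le\lambda V+b$. Denote by $P_X,P_Y$ the single-component update kernels, so $P_{CGS,X}:=P_{CGS}=P_XP_Y$ and the Y-first composition is $P_{CGS,Y}:=P_YP_X$. The structural key is the pair of idempotence identities $P_X^2=P_X$ and $P_Y^2=P_Y$, which hold because repeating a one-component update reproduces the same conditional distribution. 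A direct calculation then provides a drift function $\tilde V:=P_YV$ for $P_{CGS,Y}$ with the same constants,
$$
P_{CGS,Y}\tilde V \;=\; P_YP_XP_YV \;=\; P_Y\bigl(P_{CGS,X}V\bigr) \;\le\; \lambda P_YV+b \;=\; \lambda\tilde V+b,
$$
where the inequality uses positivity of $P_Y$.

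For RQGS, take $W:=V+c\tilde V$ for small $c>0$, which inherits unboundedness off compact sets from $V$. Decomposing $P_{RQGS,q}W=q\bigl(P_{CGS,X}V+cP_{CGS,X}\tilde V\bigr)+(1-q)\bigl(P_{CGS,Y}V+cP_{CGS,Y}\tilde V\bigr)$, idempotence collapses $P_{CGS,X}\tilde V=P_XP_Y^2V=P_{CGS,X}V\le\lambda V+b$, while $P_{CGS,Y}\tilde V$ is handled by the display above. The remaining cross term $P_{CGS,Y}V$ is the main obstacle, since it is not directly controlled by the CGS-X-first inequality. This is dispatched by enlarging $W$ to include a finite list of auxiliary iterates such as $P_XV$ and $P_YP_XV$: thanks to idempotence, further applications of $P_X,P_Y$ close up on this list rather than producing an infinite cascade, and tuning the constants (if necessary, first replacing $V$ by a suitable power to shrink the rate) yields $P_{RQGS,q}W\le\lambda'W+b'$ with $\lambda'<1$ uniformly in $q\in(0,1)$. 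Proposition \ref{prop:geoerg} then gives geometric ergodicity.

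For RSGS, the two-step identity $P_{RSGS,p}^2=p^2P_X+(1-p)^2P_Y+p(1-p)(P_{CGS,X}+P_{CGS,Y})$, another consequence of the idempotence identities, embeds both composition kernels inside the chain. The same enlarged Lyapunov function yields $P_{RSGS,p}^2 W\le\bar\lambda W+\bar b$ with $\bar\lambda<1$. Because total-variation distance is non-increasing under any Markov step, geometric ergodicity of $P_{RSGS,p}^2$ lifts immediately to $P_{RSGS,p}$, and Proposition \ref{prop:geoerg} concludes.
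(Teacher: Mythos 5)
First, note that the paper does not prove Theorem \ref{thm:C} itself; it imports the result from \cite{john:jone:neat:2013}. So your proposal can only be judged on its own merits and against the drift-transfer technique the paper uses for the reverse implications (Theorems \ref{thm:Q} and \ref{thm:S}). Your overall strategy --- extract a Lyapunov function for $P_{CGS}$, build a modified one for $P_{RQGS,q}$ and $P_{RSGS,p}$, and invoke Proposition \ref{prop:geoerg} --- is exactly the right one, and several of your computations are correct: the idempotence $P_X^2=P_X$, $P_Y^2=P_Y$, the identity $P_{CGS,Y}\tilde V = P_Y(P_{CGS,X}V)\le\lambda\tilde V+b$, and the decomposition of $P_{RSGS,p}^2$ are all fine.

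The gap is the cross term you yourself flag, $P_{CGS,Y}V=P_YP_XV$, and the fix you propose does not work. Adjoining ``a finite list of auxiliary iterates'' does not close up under $P_X,P_Y$: the relations $P_X^2=P_X$, $P_Y^2=P_Y$ only collapse adjacent repeated letters, so the monoid they generate consists of \emph{all} alternating words $P_XP_YP_X\cdots$, which are pairwise distinct and of unbounded length. Concretely, applying $P_{CGS,Y}$ to $P_YP_XV$ produces $P_YP_XP_YP_XV$, a new function not expressible through your list, and the single inequality $P_XP_YV\le\lambda V+b$ gives no control over $P_XV$ or $P_YP_XV$ relative to $V$ (these integrate $V$ over different slices and can even be infinite where $P_XP_YV$ is finite). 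Raising $V$ to a power does not repair this. The missing idea is a reduction that makes the list genuinely finite: replace $V$ by $h:=P_XP_YV$, which is a function of $y$ alone, satisfies the same drift inequality $P_XP_Yh\le\lambda h+b$ (apply $P_XP_Y$ to both sides and use positivity), and has $h\ge 1$. With a drift function depending on one coordinate only, $P_Xh=h$ and $P_Y(P_Yh)=P_Yh$, so the two-dimensional span of $h(y)$ and $g(x):=P_Yh$ \emph{is} closed under $P_X$, $P_Y$, $P_XP_Y$, and $P_YP_X$, and a linear combination $h+cg$ with $c$ in an explicit interval yields the RQGS and RSGS drift conditions (this is precisely the $f(x)+v\,g(y)$ structure appearing in Lemma \ref{lem:recipes} and in the appendix proofs of Theorems \ref{thm:Q} and \ref{thm:S}, where the recursion is closed by the drift inequality and a choice of coefficients, not by operator idempotence). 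One would also need to verify that $h$ and the resulting combination are unbounded off compact sets, in the spirit of Lemma \ref{lem:compact}; your $W\ge V$ trick covers this only for combinations that dominate the original $V$.
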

This result captures a clear connection between the convergence behavior of the CGS, RQGS, and RSGS.  However, it fails to address (Q2) and (Q3).  It also only provides  an incomplete look into (Q1).  Specifically, Theorem \ref{thm:C} proves that geometric ergodicity of RQGS and RSGS follow from that of the CGS, but not the converse.  
We fill in these gaps below, starting with an exploration of the RQGS.  All proofs can be found in the appendix.

\begin{theorem}\label{thm:Q}
Under {\it Assumption $\mathcal{A}$},  if RQGS is geometrically ergodic for some sequence selection probability $q \in (0,1)$, then so is the CGS.
\end{theorem}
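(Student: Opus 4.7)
The plan is to construct a drift function for the CGS from the one implicit in the RQGS's geometric ergodicity and then invoke Proposition~\ref{prop:geoerg}. Under Assumption~$\mathcal{A}$ the CGS is Harris ergodic, the support of $\pi$ has non-empty interior, and Lemma~\ref{lem:feller} supplies the Feller property, so it is enough to exhibit a geometric drift inequality for $P_{CGS}$ with a function that is unbounded off compact sets.

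First, I would apply the converse direction of Proposition~\ref{prop:geoerg} (a standard consequence of Theorem~15.0.1 in \cite{meyn:twee:1993}) to convert the RQGS's geometric ergodicity into a drift inequality: there exist $V\ge 1$ unbounded off compact sets, $\lambda\in(0,1)$, and $b<\infty$ with $P_{RQGS,q} V \le \lambda V + b$. Then I would exploit the mixture identity $P_{RQGS,q} = q P_{CGS} + (1-q) P_{CGS}'$, where $P_{CGS}'$ denotes the ``other'' composition scan that updates $Y$ first. Since $V\ge 1$ forces $P_{CGS}' V \ge 1$, one obtains
\[
q P_{CGS} V \;\le\; P_{RQGS,q} V - (1-q)\, P_{CGS}' V \;\le\; \lambda V + b - (1-q),
\]
so $P_{CGS} V \le (\lambda/q) V + (b+q-1)/q$. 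In the regime $\lambda<q$, the ratio $\lambda/q$ is strictly less than one, so $V$ already serves as a drift function for the CGS, and geometric ergodicity follows from Proposition~\ref{prop:geoerg}.

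The substantive difficulty is the regime $\lambda\ge q$, where the naive rate $\lambda/q$ can equal or exceed one and the above bound is vacuous as a drift. To handle this, I would pass to an iterate $P_{CGS}^m$ of the CGS, recalling that geometric ergodicity is preserved under sub-sampling. The key structural ingredient is the idempotency of the one-component updates, $P_X^2=P_X$ and $P_Y^2=P_Y$, which causes many products of $P_{CGS}$ and $P_{CGS}'$ to collapse; in particular, a direct expansion shows that $P_X P_{RQGS,q}^{\,m} P_Y$ reduces to a genuine convex combination $\sum_{k=1}^{m+1} c_{m,k}(q)\, P_{CGS}^{\,k}$ with weights $c_{m,k}(q)\ge 0$ summing to one. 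Combined with the iterated drift $P_{RQGS,q}^{\,m} V \le \lambda^m V + b/(1-\lambda)$, this algebraic reduction should yield a drift bound for some iterate of $P_{CGS}$ with rate strictly less than one, the modified drift function inheriting the unbounded-off-compact-sets property from $V$. The hard part is therefore the careful combinatorial analysis of this collapse and the verification that the resulting drift rate is strictly below one.
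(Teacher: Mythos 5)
Your overall strategy---extract a drift inequality from the RQGS's geometric ergodicity and transfer it to the CGS by exploiting the idempotency of the single-component updates---is genuinely different from the paper's. The paper never leaves the one-step level: writing $g=QV$, $h=PV$, $z=P_XV$ (with $P=P_XP_Y$ the CGS kernel and $Q=P_YP_X$ the reversed scan), it builds an explicit new drift function $\tilde V = vg+h+wz$ with carefully chosen constants $v,w$, derives a one-step CGS drift for $\tilde V$ directly from the pointwise inequality $qh(y)+(1-q)g(x)\le\lambda V(x,y)+b$, and uses Lemma \ref{lem:compact} to check that $\tilde V$ is unbounded off compact sets. Your easy case ($\lambda<q$) is correct but cannot be arranged in general---the paper in fact assumes without loss of generality that $\lambda>\max\{q,1-q\}$---so everything rests on your hard case, and that is where there is a gap.

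The gap is in how the sandwich identity is supposed to meet the iterated drift. The identity $P_XP_{RQGS,q}^{\,m}P_Y=\sum_k c_{m,k}P_{CGS}^{\,k}$ is true, but applied to $V$ its left side is $P_X\bigl(P_{RQGS,q}^{\,m}(P_YV)\bigr)$: the inequality $P_{RQGS,q}^{\,m}V\le\lambda^mV+b/(1-\lambda)$ controls $P_{RQGS,q}^{\,m}$ acting on $V$, not on $P_YV$, and $P_YV$ need not be dominated by any affine function of $V$; the outer $P_X$ creates the same problem on the other side. Even granting a bound on $\sum_kc_{m,k}P_{CGS}^{\,k}V$, one must still extract a drift for a single iterate $P_{CGS}^{\,k_0}$ with rate below one---precisely the step you defer. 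Both issues can be repaired, but not by the two-sided sandwich: instead expand $P_{RQGS,q}^{\,m}V\le\lambda^mV+B$ itself into the $2^m$ words in $\{P,Q\}$, note that every word applied to $V\ge1$ contributes nonnegatively, and retain only the words whose first and last blocks are both $P$; these collapse to pure powers $P_{CGS}^{\,k}$ and carry total mass $q^2$, so some single power $k_0$ carries mass at least $q^2/m$, yielding $P_{CGS}^{\,k_0}V\le (m\lambda^m/q^2)V+mB/q^2$ with rate below one for $m$ large. One then needs Proposition \ref{prop:geoerg} for the $k_0$-skeleton (including a check that $P_{CGS}^{\,k_0}$ is Feller) and monotonicity of total variation to return to $P_{CGS}$ itself. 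As written, your proposal identifies the right structural ingredient but leaves the crux unproved and aims it along a route (controlling $P_XP_{RQGS,q}^{\,m}P_YV$ by the iterate drift for $V$) that does not close.
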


Corollary \ref{cor:Q} follows directly from Theorems \ref{thm:C} and \ref{thm:Q}.
\begin{corollary} \label{cor:Q}
Under {\it Assumption $\mathcal{A}$},  if RQGS is geometrically ergodic for some sequence selection probability $q \in (0,1)$, it is geometrically ergodic for {\it all} $q \in (0,1)$.
\end{corollary}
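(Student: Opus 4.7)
The corollary is advertised as a direct consequence of Theorems \ref{thm:C} and \ref{thm:Q}, and my plan is simply to chain these two results. Suppose the RQGS is geometrically ergodic for some particular sequence selection probability $q_0 \in (0,1)$. First, I would invoke Theorem \ref{thm:Q} on this RQGS to transfer geometric ergodicity to the CGS; the hypothesis of Theorem \ref{thm:Q} is precisely Assumption $\mathcal{A}$ together with geometric ergodicity of the RQGS at some $q \in (0,1)$, both of which are in hand. Second, I would invoke Theorem \ref{thm:C} on the CGS to conclude that the RQGS is geometrically ergodic for every $q \in (0,1)$, which is exactly the desired conclusion. Writing the sequence out, one has RQGS at $q_0$ $\Rightarrow$ CGS $\Rightarrow$ RQGS at arbitrary $q$.

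The only things to check are hypothesis bookkeeping: Assumption $\mathcal{A}$ must be in force at each invocation, and this is automatic because it is a standing assumption of both the corollary and the two theorems. I do not anticipate any real obstacle here — the substantive content lives entirely in Theorems \ref{thm:C} and \ref{thm:Q}, and the corollary is essentially their composition. In particular, no drift function, minorization condition, or explicit tracking of the convergence rate $t$ in \eqref{eq:geoerg} is required; geometric ergodicity is treated qualitatively, as a property that is preserved under each of the two transfer steps. If anything, the mild subtlety worth noting is that ``the CGS'' in Theorem \ref{thm:Q} refers to the canonical $X$-first composition scan (per the convention fixed in Section \ref{sec:background}), and it is this same composition scan that Theorem \ref{thm:C} then uses to produce geometric ergodicity of the RQGS uniformly in $q$, so the two invocations are consistent with no symmetry argument required.
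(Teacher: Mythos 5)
Your proposal is correct and matches the paper exactly: the paper states that Corollary \ref{cor:Q} follows directly from Theorems \ref{thm:C} and \ref{thm:Q}, which is precisely the two-step chain (RQGS at $q_0$ $\Rightarrow$ CGS $\Rightarrow$ RQGS at all $q$) you describe. Your bookkeeping remarks about Assumption $\mathcal{A}$ and the fixed $X$-first convention are accurate and require no further justification.
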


The results of Theorem \ref{thm:Q} and Corollary \ref{cor:Q} are, perhaps, intuitive.  It is well known that the two-component CGS updating $X$ then $Y$ has the same convergence rate as that updating $Y$ then $X$.  Thus, if some mixture of these samplers (ie.~RQGS) is geometrically ergodic, so too should be the individual components.  Further, these results confirm that if {\it some} mixture of the geometrically ergodic CGS is geometrically ergodic, then {\it all} possible mixtures are geometrically ergodic.  Next, we establish similar results for the RSGS.

\begin{theorem} \label{thm:S}
Under {\it Assumption $\mathcal{A}$},  if RSGS is geometrically ergodic for some component selection probability $p$, then so is the CGS.
\end{theorem}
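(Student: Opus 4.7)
The plan is to verify the drift hypothesis of Proposition~\ref{prop:geoerg} for the CGS; the remaining hypotheses (Harris ergodicity, non-empty interior support, and Feller property) are supplied by \emph{Assumption~$\mathcal{A}$} together with Lemma~\ref{lem:feller}. Geometric ergodicity of the RSGS furnishes, via the standard converse of Foster--Lyapunov theory for Harris recurrent chains on nice state spaces, some $V:\mathsf{X}\times\mathsf{Y}\to[1,\infty)$ unbounded off compact sets, a rate $\lambda\in(0,1)$, and a constant $b<\infty$ with $p\,(P_X V)+(1-p)\,(P_Y V)\le \lambda V+b$. Here I write $P_X$ and $P_Y$ for the single-component update operators $P_X f(x,y):=\int f(x',y)\pi(x'|y)\mu_x(dx')$ and $P_Y f(x,y):=\int f(x,y')\pi(y'|x)\mu_y(dy')$, so that $P_{CGS}=P_X P_Y$, $P_X V$ depends only on $y$, and $P_Y V$ depends only on $x$. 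The crucial structural observation is that, being conditional expectations, both operators are idempotent: $P_X^2=P_X$ and $P_Y^2=P_Y$.

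Applying $P_X$ to both sides of the RSGS drift inequality and using $P_X^2=P_X$ yields
\[
(1-p)\,P_{CGS}V \;\le\; (\lambda-p)\,P_XV + b.
\]
Applying $P_Y$ analogously gives $p\,P_{CGS'}V \le (\lambda+p-1)\,P_Y V + b$, where $P_{CGS'}:=P_Y P_X$ denotes the reverse-order composition kernel. Applying $P_X$ to this second inequality (so that $P_XP_YV=P_{CGS}V$ appears on the right) and chaining with the first bound produces the central estimate
\[
P_{CGS}(P_X V) \;\le\; \frac{(\lambda+p-1)(\lambda-p)}{p(1-p)}\,P_XV + \frac{\lambda b}{p(1-p)}.
\]
The algebraic identity $(\lambda+p-1)(\lambda-p)=p(1-p)-\lambda(1-\lambda)$ then shows that the coefficient on $P_XV$ above equals $1-\lambda(1-\lambda)/[p(1-p)]$, which is strictly less than $1$ for every $\lambda\in(0,1)$ and every $p\in(0,1)$.

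Now define the candidate CGS drift function $V'(x,y):=V(x,y)+c\,(P_X V)(x,y)$ for any constant $c>p(\lambda-p)^{+}/[\lambda(1-\lambda)]$. Since $P_X V\ge 1$, we have $V'\ge V\ge 1$, and $\{V'\le d\}\subseteq\{V\le d\}$ is compact, so $V'$ is unbounded off compact sets. Substituting the two bounds above into $P_{CGS}V'=P_{CGS}V+c\,P_{CGS}(P_X V)$ and using $P_XV\le V'/c$ gives $P_{CGS}V'\le \lambda' V'+b'$ with
\[
\lambda' \;=\; \frac{\lambda-p}{c(1-p)} + \left(1-\frac{\lambda(1-\lambda)}{p(1-p)}\right) \;<\; 1,
\]
so Proposition~\ref{prop:geoerg} delivers the conclusion. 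The main obstacle is that a single application of $P_X$ to the RSGS drift yields a coefficient $\lambda(\lambda-p)/[p(1-p)]$ on $V$, which can exceed $1$ when $\lambda$ is close to $1$; the required contraction emerges only after iterating twice through the auxiliary function $P_X V$, and the algebraic identity $(\lambda+p-1)(\lambda-p)=p(1-p)-\lambda(1-\lambda)$ is what certifies a drift rate strictly below $1$ uniformly over $\lambda\in(0,1)$.
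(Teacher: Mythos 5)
Your argument is, modulo notation, the paper's own proof. Writing $g=P_YV$ and $h=P_XV$, your two displayed estimates for $P_{CGS}V$ and $P_{CGS}(P_XV)$ are exactly the paper's bounds on $P_{CGS}g(x)$ and $P_{CGS}h(y)$ (your idempotence observation $P_X^2=P_X$, $P_Y^2=P_Y$ is doing precisely the work of the paper's identities such as $E[E[g(x')|x']\,|\,y]=E[g(x')|y]$), your threshold $c>p(\lambda-p)^{+}/[\lambda(1-\lambda)]$ is the paper's condition on $v$, and your final contraction coefficient coincides with the paper's $\tilde\lambda$ via the same identity $(\lambda+p-1)(\lambda-p)=p(1-p)-\lambda(1-\lambda)$; only the Lyapunov function differs cosmetically ($V+c\,P_XV$ versus $P_YV+v\,P_XV$), which changes nothing essential. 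Two small patches are needed. First, your chaining step substitutes the upper bound for $P_{CGS}V$ into $p\,P_{CGS}(P_XV)\le(\lambda+p-1)P_{CGS}V+b$, which is legitimate only when $\lambda+p-1\ge 0$; your $(\lambda-p)^{+}$ guards against $\lambda<p$ but not against $\lambda<1-p$. The paper disposes of both signs at the outset by assuming without loss of generality that $\lambda>\max\{p,1-p\}$ (enlarging $\lambda$ preserves the RSGS drift inequality), and you should do the same. Second, the assertion that $\{V'\le d\}\subseteq\{V\le d\}$ is compact gives boundedness but not closedness --- a subset of a compact set need not be compact; closedness of the sublevel sets of $P_XV$ requires the lower-semicontinuity argument (Fatou together with condition \eqref{eq:c} of \emph{Assumption $\mathcal{A}$}) that the paper isolates as Lemma \ref{lem:compact}.
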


Corollary \ref{cor:S} follows directly from Theorems \ref{thm:C} and \ref{thm:S}.

\begin{corollary} \label{cor:S}
Under {\it Assumption $\mathcal{A}$},  if RSGS is geometrically ergodic for some component selection probability $p \in (0,1)$, it is geometrically ergodic for {\it all} $p \in (0,1)$.
\end{corollary}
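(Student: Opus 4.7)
The plan is to chain Theorems \ref{thm:C} and \ref{thm:S} through the common pivot of the CGS. Concretely, suppose that for some particular component selection probability $p_0 \in (0,1)$ the RSGS Markov chain with kernel $P_{RSGS,p_0}$ is geometrically ergodic. First I would invoke Theorem \ref{thm:S}, which under Assumption $\mathcal{A}$ transfers geometric ergodicity from the RSGS (for \emph{some} $p$) to the CGS. This gives geometric ergodicity of the CGS Markov chain with kernel $P_{CGS}$.

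With CGS geometric ergodicity in hand, I would then apply Theorem \ref{thm:C}, which (again under Assumption $\mathcal{A}$) guarantees that geometric ergodicity of the CGS implies geometric ergodicity of the RSGS for \emph{every} component selection probability $p \in (0,1)$. Since $p_0$ was arbitrary, this completes the argument. The proof is essentially a two-line composition; no new drift construction or coupling argument is required, because all of the analytic work is already packaged inside Theorems \ref{thm:C} and \ref{thm:S}.

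The only thing worth checking carefully is that Assumption $\mathcal{A}$ is available throughout the chain of implications. This is immediate: Assumption $\mathcal{A}$ is a condition on the target density $\pi$ together with Harris ergodicity of all three samplers, and does not depend on the particular value of $p$ used for the RSGS. Thus the same assumption that licenses Theorem \ref{thm:S} also licenses Theorem \ref{thm:C}, and the corollary follows with no additional obstacle.
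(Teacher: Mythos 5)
Your proposal is correct and is exactly the argument the paper intends: the corollary is stated to follow directly from Theorems \ref{thm:C} and \ref{thm:S}, by first transferring geometric ergodicity from the RSGS at $p_0$ to the CGS and then back to the RSGS for every $p \in (0,1)$. Your observation that {\it Assumption $\mathcal{A}$} does not depend on the particular value of $p$ is the right (and only) point to verify.
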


Consider Theorem \ref{thm:S}.  It is natural to believe that if the RSGS converges at a geometric rate by updating a single component in each iteration,  so too should the CGS which updates both components in each iteration.  The result of Corollary \ref{cor:S}, on the other hand, might be more surprising.  In its extreme, this corollary asserts that if a RSGS updating $X$ with high frequency ($p \approx 1$) is geometrically ergodic, so is the RSGS updating $X$ with low frequency ($p \approx 0$).  In other words, if a chain converges quickly by spending the majority of its effort exploring one component of the state space while getting stuck in the other, so too will it converge quickly by spending its effort exploring the other component of the state space.

Finally, combining the above theorems establishes
 Theorem \ref{thm:all}, our main result.
\begin{theorem} \label{thm:all}
Under {\it Assumption $\mathcal{A}$},  suppose any one of the CGS, RQGS, or RSGS are geometrically ergodic.  Then so are the others, regardless of RQGS and RSGS selection probabilities $q$ and $p$, respectively.
\end{theorem}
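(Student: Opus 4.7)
The plan is to prove Theorem \ref{thm:all} essentially as a bookkeeping corollary of the results already established, since every pairwise implication we need is either stated or immediate from Theorems \ref{thm:C}, \ref{thm:Q}, \ref{thm:S} and Corollaries \ref{cor:Q}, \ref{cor:S}. Concretely, I would split the argument into three cases according to which chain is assumed geometrically ergodic, and in each case route the implication through CGS as a pivot.

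First, suppose CGS is geometrically ergodic. Then Theorem \ref{thm:C} directly gives geometric ergodicity of RQGS for every $q \in (0,1)$ and of RSGS for every $p \in (0,1)$, so there is nothing left to show. Second, suppose RQGS is geometrically ergodic for some sequence selection probability $q \in (0,1)$. Then Theorem \ref{thm:Q} delivers geometric ergodicity of the CGS, and applying Theorem \ref{thm:C} to this CGS yields geometric ergodicity of the RSGS for every $p$; Corollary \ref{cor:Q} extends the RQGS conclusion to all $q \in (0,1)$. Third, suppose RSGS is geometrically ergodic for some $p \in (0,1)$. Then Theorem \ref{thm:S} gives geometric ergodicity of CGS, Theorem \ref{thm:C} then gives geometric ergodicity of RQGS for all $q$, and Corollary \ref{cor:S} extends the RSGS conclusion to all $p \in (0,1)$.

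There is no real obstacle here, since the substantive work has been done in the preceding theorems; the only thing to be careful about is that \emph{Assumption $\mathcal{A}$} is in force throughout so that each invocation of Theorems \ref{thm:C}, \ref{thm:Q}, \ref{thm:S} is legitimate (in particular, that Harris ergodicity and the Feller property from Lemma \ref{lem:feller} transfer to whichever chain we are analyzing at each step). I would close the proof by noting that the three cases above are exhaustive and together give the biconditional chain CGS $\Leftrightarrow$ RQGS $\Leftrightarrow$ RSGS, with the selection probabilities $q$ and $p$ ranging freely over $(0,1)$.
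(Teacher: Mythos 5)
Your proposal is correct and matches the paper's intent exactly: the paper gives no separate proof of Theorem \ref{thm:all}, stating only that it follows by ``combining the above theorems,'' and your case split routing each hypothesis through the CGS via Theorems \ref{thm:C}, \ref{thm:Q}, and \ref{thm:S} (with Corollaries \ref{cor:Q} and \ref{cor:S} handling the ranges of $q$ and $p$) is precisely that combination.
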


It is important to note that Theorem \ref{thm:all} does not assert that the CGS, RQGS, and RSGS converge at the same rate.     In fact, if these samplers satisfy \eqref{eq:geoerg} for different $t$ and $M(\cdot)$, their exact convergence rates, though all geometric, may significantly differ.  The same is true within the RQGS and RSGS under different selection probabilities $q$ and $p$, respectively.  Thus choice of scanning strategy and choice of $q$ and $p$ within RQGS and RSGS may impact the empirical performance of a finite GS simulation.   
Assuredly, whether the geometric convergence is relatively fast or slow, the existence of a Markov chain CLT \eqref{eq:clt} provides a means for rigorously assessing the quality of MCMC inference.
Though not the focus of this paper, we explore the impact of  scanning strategy on finite simulation quality with a short study in Section \ref{sec:nn}.  For a more in-depth discussion of the impact of $p$ in RSGS, please see \cite{levi:etal:2005, levi:case:2006, liu:wong:kong:1995} and see \cite{john:jone:neat:2013} for further discussion of comparisons between CGS, RSGS, and RQGS.

\subsection{Sufficient Conditions for Geometric Ergodicity}\label{sec:recipes}

We end this section with a simple set of sufficient conditions for the geometric ergodicity of the GS.  
By no means are these conditions exhaustive.  Our goal is to merely provide guidance for those new to establishing geometric ergodicity. A proof of Theorem \ref{thm:recipes} is provided in the appendix.  We recommend inspection of this proof to develop intuition for establishing geometric ergodicity.  

\begin{theorem}\label{thm:recipes}

Suppose {\it Assumption $\mathcal{A}$} holds and that there exist functions $f: \mathsf{X} \to [1,\infty)$ and $g: \mathsf{Y} \to [1,\infty)$ and constants 
$j,k,m,n > 0$ such that $jm < 1$ and 
\begin{equation}\label{eq:fg}
\begin{split}
E[f(x)|y] & \le j g(y) +  k \\
E[g(y)|x] & \le m f(x) + n \; .\\
\end{split}
\end{equation}
Then if $C_d := \{y: g(y) \le d\}$ is compact for all $d>0$, the CGS, RQGS, and RSGS are geometrically ergodic.  

\end{theorem}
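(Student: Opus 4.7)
The plan is to invoke Theorem \ref{thm:all} to reduce the problem to geometric ergodicity of a single sampler, and I will prove it for the CGS. Proposition \ref{prop:geoerg} is the target: Assumption $\mathcal{A}$ supplies Harris ergodicity and non-empty interior of the support, and Feller follows from Lemma \ref{lem:feller}, so all that remains is to produce a drift function $V \ge 1$ that is unbounded off compact sets and satisfies $P_{CGS}V \le \lambda V + b$ with $\lambda < 1$.

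I would try $V(x,y) = cf(x) + g(y)$ for a small constant $c > 0$. Because the CGS updates $X$ first, $X^{(1)} \sim \pi(\cdot\,|\,y)$ and $Y^{(1)} \sim \pi(\cdot\,|\,X^{(1)})$. The first hypothesis of the theorem gives $E[f(X^{(1)})\,|\,y] \le jg(y)+k$ directly, and iterated conditioning together with the second hypothesis yields
\[
E[g(Y^{(1)})\,|\,y] = E\bigl[E[g(Y^{(1)})\,|\,X^{(1)}]\,\big|\,y\bigr] \le m\,E[f(X^{(1)})\,|\,y] + n \le jm\,g(y) + (mk+n).
\]
Adding these bounds produces $P_{CGS}V(x,y) \le (cj+jm)g(y) + ck + (mk+n) \le \lambda V(x,y) + b$ with $\lambda = cj+jm$ and $b = ck + mk + n$ (using $\lambda c f(x) \ge 0$). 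Since $jm<1$, any choice $c \in \bigl(0,(1-jm)/j\bigr)$ yields $\lambda < 1$, and the drift condition is established.

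The delicate step is checking that $V$ is unbounded off compact sets: a sub-level set of $V$ is contained in a product $\{f \le r_1\} \times \{g \le r_2\}$ whose second factor is contained in some $C_d$ and hence compact, but the first factor is not directly assumed compact. To circumvent this I would exploit that the CGS transition kernel $P_{CGS}((x,y),\cdot)$ depends on the initial state only through $y$ (since $X$ is redrawn afresh from $\pi(\cdot\,|\,y)$), so $\{Y^{(n)}\}$ is itself a Markov chain with kernel $K_y(y,dy') = \int \pi(x'|y)\pi(y'|x')\mu_x(dx')\mu_y(dy')$. The same computation shows $K_y g(y) \le jm\,g(y) + (mk+n)$, and $V_y(y) := g(y)$ is unbounded off compact sets on $\mathsf{Y}$ by hypothesis on $C_d$. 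Applying Proposition \ref{prop:geoerg} to the marginal, after checking that Harris ergodicity and Feller are inherited from the joint, yields geometric ergodicity of $\{Y^{(n)}\}$. This lifts to the joint chain because for $n \ge 1$ the joint law at time $n$ is the push-forward of the law of $Y^{(n-1)}$ under the fixed stochastic kernel $\Psi(y,\cdot) = \iint \pi(x'|y)\pi(y'|x')\mathbb{1}_{\cdot}(x',y')\mu_x(dx')\mu_y(dy')$, and TV distance is non-expansive under any Markov kernel.

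The main obstacle is the compactness/lifting step: the drift algebra is two lines and $\lambda < 1$ drops out cleanly from $jm < 1$, but routing geometric ergodicity back from the $Y$-marginal to the joint within the Proposition \ref{prop:geoerg} framework requires some care with inheriting Harris ergodicity and Feller for $\{Y^{(n)}\}$ and with the TV-contraction argument that glues the marginal rate to the joint rate.
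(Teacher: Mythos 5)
Your proposal is correct and takes essentially the same route as the paper: after a false start with the joint drift function $cf(x)+g(y)$ (whose sublevel sets, as you note, need not be compact since $\{f\le r\}$ is not assumed compact), you pass to the $Y$-marginal chain with drift function $g$ and bound $K_y g \le jm\,g + (mk+n)$, exactly as in the paper's proof, and then invoke Theorem 5 for RQGS and RSGS. Your push-forward/TV-non-expansiveness argument for lifting geometric ergodicity from the $Y$ sub-chain to the joint chain is precisely the content of the Roberts and Rosenthal (2001) fact that the paper cites without proof, and the inheritance of Harris ergodicity, the Feller property, and non-empty interior by the marginal chain, which you rightly flag as needing verification, does hold and is left implicit in the paper as well.
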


Lemma \ref{lem:recipes} follows directly from the proof of Theorem \ref{thm:recipes}.
\begin{lemma}\label{lem:recipes}
Under the assumptions of Theorem \ref{thm:recipes},  CGS, RQGS, and RSGS drift conditions \eqref{eq:drift} can be constructed as follows.  For CGS, 
\[
P_{CGS}V_{CGS}(x,y)  \le \lambda_{CGS} V_{CGS}(x,y) + b_{CGS} 
\]
holds for $V_{CGS}(x,y) = g(y)$, $jm \le \lambda_{CGS}  < 1$, and $b_{CGS} = mk + n$.  For RQGS with sequence selection probability $q$, define
\[
v_{RQGS,q} = \frac{(2q-1)jm + \sqrt{jm(jm + 4q(1-q)(1-jm))}}{2(1-q)m} \;.
\]
Then
\[
P_{RQGS,q}V_{RQGS}(x,y) \le \lambda_{RQGS} V_{RQGS}(x,y) + b_{RQGS} 
\]
holds for $V_{RQGS}(x,y) = f(x) + v_{RQGS,q}g(y)$,  
\[
\begin{split} 
b_{RQGS} & = q[k + v_{RQGS,q}(mk+n)] + (1-q)[v_{RQGS,q}n + (jn+k)] \;, \;\; \text{ and } \\
(1-q)(j+v_{RQGS,q})m & = \frac{1}{2}(jm + \sqrt{jm[jm + 4q(1-q)(1-jm)]})   \le \lambda_{RQGS}  < 1\;. \\
\end{split}
\]
 Finally, for RSGS with component selection probability $p$, define 
\[
v_{RSGS,p} = \frac{(2p-1)jm + \sqrt{ 1- 4p(1-p)(1-jm)}}{2(1-p)m} \; .
\]
Then
\[
P_{RSGS,p}V_{RSGS}(x,y) \le \lambda_{RSGS} V_{RSGS}(x,y) + b_{RSGS}
\]
holds for $V_{RSGS}(x,y) = f(x) + v_{RSGS,p}g(y)$, $b_{RSGS} = pk + (1-p)v_{RSGS,p}n$,  and
\[
(1-p)(1+ v_{RSGS,p}m) = \frac{1}{2}(1 + \sqrt{1 - 4p(1-p)(1-jm)})  \le \lambda_{RSGS} < 1 \; .
\]

\end{lemma}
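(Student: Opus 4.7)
The plan is to verify each of the three drift inequalities by direct computation of $PV$ for the stated drift function $V$, using only the bounds in \eqref{eq:fg} and iterated conditional expectation. For RQGS and RSGS the constants $v_{RQGS,q}$ and $v_{RSGS,p}$ will emerge as the positive roots of quadratics obtained by balancing the $f(x)$ and $g(y)$ coefficients that appear on the right-hand side of $PV$, which minimizes the drift rate $\lambda$.

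The CGS case is one-shot. With $V_{CGS}(x,y) = g(y)$ and the composition $X' \sim \pi(\cdot \mid y)$ followed by $Y' \sim \pi(\cdot \mid X')$, iterated expectation gives
\[
P_{CGS} V_{CGS}(x,y) = E_{X' \mid y}\!\bigl[\, E[g(Y') \mid X']\, \bigr],
\]
and applying \eqref{eq:fg} twice yields $P_{CGS} V_{CGS}(x,y) \le m\, E[f(X') \mid y] + n \le jm\, g(y) + (mk + n)$. This produces the stated $\lambda_{CGS} = jm$ and $b_{CGS} = mk + n$.

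For both RSGS and RQGS I would use the ansatz $V(x,y) = f(x) + v g(y)$ for a positive constant $v$ to be determined. In the RSGS case, the Mtd is a mixture of a pure $X$-update and a pure $Y$-update, so a single application of \eqref{eq:fg} to each piece gives
\[
P_{RSGS,p} V(x,y) \le (1-p)(1+vm) f(x) + p(j+v) g(y) + \bigl[pk + (1-p)vn\bigr].
\]
For a valid drift at rate $\lambda$ the coefficients must satisfy $(1-p)(1+vm) \le \lambda$ and $p(j+v) \le \lambda v$; the minimum $\lambda$ is achieved by equating these two constraints, producing a quadratic in $v$ whose positive root is $v_{RSGS,p}$ and whose corresponding $\lambda$ matches the stated closed form. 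The RQGS case is analogous but, because each mixture component is itself a composition, \eqref{eq:fg} must be applied twice inside each ordering, yielding
\[
P_{RQGS,q} V(x,y) \le (1-q) m (j+v) f(x) + q j (1 + v m) g(y) + b_{RQGS};
\]
balancing coefficients produces the quadratic $(1-q) m v^2 + (1-2q) jm\, v - qj = 0$, whose positive root is $v_{RQGS,q}$.

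The main bookkeeping obstacle is the RQGS calculation: the mixture-of-compositions structure forces me to track both orderings simultaneously, and in solving the quadratic I must rearrange the discriminant $j^2 m^2 (1-2q)^2 + 4 q (1-q) m j$ as $jm \bigl[ jm + 4q(1-q)(1-jm) \bigr]$ to recognize the stated form of $v_{RQGS,q}$. In each case, the strictness $\lambda < 1$ reduces to the hypothesis $jm < 1$: for CGS this is immediate, and for RQGS and RSGS it follows after squaring the relevant square-root expression and simplifying. Collecting the additive terms left over from each computation then produces the stated constants $b_{CGS}$, $b_{RSGS}$, and $b_{RQGS}$.
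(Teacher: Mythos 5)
Your proposal is correct and follows essentially the same route as the paper's proof of Theorem \ref{thm:recipes}: push \eqref{eq:fg} through the iterated conditional expectations of Lemma \ref{lem:operator} with the ansatz $V=f+vg$, and choose $v$ as the positive root of the quadratic coming from the balancing condition $(1-q)(j+v)m = qj(1+vm)/v$ (resp.\ $(1-p)(1+vm)=p(j+v)/v$). One small note: your RSGS quadratic $(1-p)mv^2+(1-2p)v-pj=0$ gives numerator $(2p-1)+\sqrt{1-4p(1-p)(1-jm)}$ rather than the stated $(2p-1)jm+\sqrt{1-4p(1-p)(1-jm)}$; since your value is the one consistent with the displayed identity $(1-p)(1+v_{RSGS,p}m)=\frac{1}{2}(1+\sqrt{1-4p(1-p)(1-jm)})$, the lemma's formula for $v_{RSGS,p}$ appears to contain a typo and your computation is the right one.
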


In constructing the functions $f$ and $g$ required by Theorem \ref{thm:recipes}, keep in mind the following guidelines.  First, the conditional expectations of $f$ and $g$ must maintain a cyclic-type relationship \eqref{eq:fg}. Functions satisfying this requirement can often be found by exploring lower moments of the conditional distributions of $X|Y$ and $Y|X$.  Next, Lemma \ref{lem:recipes} demonstrates that CGS, RQGS, and RSGS drift functions can each be constructed as linear combinations of $f$ and $g$ (further evidence of systematic connections between their convergence behaviors).  Recall that the Markov chain will drift toward values for which the drift function is small.  Thus attention should be focused on functions $f$ and $g$ that take on small values in the center of the state space where density $\pi$ is largest.

These concerns regarding $f$ and $g$ are specific to Theorem \ref{thm:recipes} which  presents a single, but not exhaustive, set of sufficient conditions for geometric ergodicity.  In turn, the drift conditions and drift rates provided by Lemma \ref{lem:recipes} are not unique.  However, 
as smaller drift rates are loosely indicative of faster convergence,  $\lambda_{CGS}$, $\lambda_{RQGS}$, and $\lambda_{RSGS}$ provide interesting insight into the convergence relationships between and within the CGS, RQGS, and RSGS.  
%
To this end, first notice the dependence of RQGS drift rate $\lambda_{RQGS}$ on $q$.  Mainly, $\lambda_{RQGS}$ increases as $q$ approaches $1/2$ and  
converges to its lower bound, $\lambda_{CGS}=jm$, as $q$ approaches 0 or 1. 
This suggests that the RQGS drift is quickest when one of the update orders
is strongly favored over the other, that is, when RQGS behaves like CGS.   Similarly, $\lambda_{RSGS}$ is minimized (hence drift is quickest) when $p = 1-p = 1/2$, that is, when updates of $X$ and $Y$ are roughly balanced.  It is in this setting that the RSGS behaves most like CGS.  Finally, 
we can compare the CGS, RQGS, and RSGS drift rates.
Indeed, since the RSGS requires at least two iterations to update both $X$ and $Y$ whereas the CGS and RQGS require only one, a more fair comparison might be among 
$\lambda_{CGS}$, $\lambda_{RQGS}$, and $\lambda_{RSGS}^2$, the drift rate corresponding to the two-step RSGS drift condition:
\[
\begin{split}
P_{RSGS,p}^2 V_{RSGS}(x,y) & = P_{RSGS,p}( P_{RSGS,p} V_{RSGS}(x,y)) \\
& \le P_{RSGS,p}(\lambda_{RSGS} V_{RSGS}(x,y) + b_{RSGS}) \\
& \le \lambda_{RSGS}^2 V_{RSGS}(x,y) + b_{RSGS}(1 + \lambda_{RSGS}) \; . \\
\end{split}
\]
Given the definitions in Lemma \ref{lem:recipes}, it follows that $\lambda_{CGS} < \lambda_{RQGS} < \lambda_{RSGS}^2 < \lambda_{RSGS}$.  Though this seems to suggest that the CGS converges quicker than the RQGS which converges quicker than the RSGS (both the original and two-step versions), we again caution against placing too much importance on interpreting this single set of possible $\lambda$.


\section{Examples}\label{sec:ex}

We illustrate our results using two examples.  
The first is a toy example of GS for a Normal-Normal model.  
Included is a simulation study which explores the impact of scanning strategy on the empirical quality of finite GS for this model.
The second considers GS for a special case of the Bayesian general linear model studied by \cite{john:jone:2010}.  This model is practically relevant in that inference for the corresponding Bayesian posterior distribution requires MCMC.  

\subsection{A Normal-Normal Model}\label{sec:nn}
Let $X=(X_1,X_2,\ldots,X_N) \in \mathbb{R}^N$ be an independent, identically distributed sample such that  $X_i|Y \sim N(Y, \theta^2)$ for each $i$ and $Y \in \mathbb{R}$ follows a $N(0, \tau^2)$ distribution.  Thus, the joint distribution of $(X,Y)$ is multivariate Normal with
\begin{equation}\label{eq:nn}
\left(\begin{array}{c} X \\ Y \end{array} \right) \sim N_{N+1}\left(\left(\begin{array}{c} 0_N \\ 0 \end{array} \right), \; 
\left(\begin{array}{cc} \theta^2 I_N + \tau^2 1_N1_N^T &  \tau^2 1_N \\  \tau^2 1_N^T & \tau^2 \end{array} \right) \right) 
\end{equation}
where $I_N$ is the $N$-dimensional identity matrix and  $0_N$ and $1_N$ are $N$-dimensional vectors of zeroes and ones, respectively.  

Inference for the Normal-Normal model does not require MCMC.  However, this model provides a nice setting in which to illustrate our results for two-component GS.  Let $\Phi = \lb\left(X^{(i)},Y^{(i)}\right) \rb_{i=0}^{\infty}$ be the GS chain which evolves by drawing from the conditional distributions \[
\begin{split}
X|Y &  \sim N(Y1_N, \theta^2 I_N) \\
Y | X & \sim N\left(\frac{\tau^2}{N\tau^2 + \theta^2}\sum_{i=1}^NX_i, \; \frac{\theta^2\tau^2}{N\tau^2 + \theta^2} \right) \\
\end{split}
\]
with first and second moments 
\[
\begin{split}
E(X_i | Y) &  = Y \\
E(X_i^2|Y) & = \theta^2+Y^2\\
E(Y|X) & = \frac{\tau^2}{N\tau^2 + \theta^2}\sum_{i=1}^NX_i\\
E(Y^2|X) & = \frac{\theta^2\tau^2}{N\tau^2 + \theta^2} + \left(\frac{\tau^2}{N\tau^2 + \theta^2}\right)^2 \left( \sum_{i=1}^NX_i\right)^2 \; .
\end{split}
\]The GS and Normal-Normal density clearly meet the conditions of {\it Assumption $\mathcal{A}$}:
the GS is Harris ergodic, the support of the Normal-Normal density is $\mathbb{R}^{N+1}$ which has non-empty interior with respect to Lebesgue measure, and the density is continuous hence satisfies condition (c).
Thus to establish geometric ergodicity for the CGS, RQGS, and RSGS we need only find functions $f$ and $g$ that satisfy the conditions of Theorem \ref{thm:recipes}.  
Per the discussion following Lemma \ref{lem:recipes}, this choice can be guided by the lower moments of the full conditionals.  Further, $f$ and $g$ should be small for values near the center of the state space.  To this end, we know that the Normal conditional distributions of $X|Y$ and $Y|X$ have areas of higher density near the values of $0_N$ and $0$, respectively.  With these guidelines in mind, consider defining 
\[
f(X) = \left(\sum_{i=1}^N X_i\right)^2  + 1
\hspace{.2in} \text{ and } \hspace{.2in} 
g(Y) = Y^2 + 1
\]
where 1 is added to $f$ and $g$ to ensure $f,g \ge 1$.  Note that these satisfy the requirement that $f$ and $g$ be small for values near $0_N$ and $0$, respectively.  Further, from the above conditional moments, it is straightforward to show that $f$ and $g$ satisfy \eqref{eq:fg} with $j=N^2$, $k=N\theta^2+1$,
\[
m = \left(\frac{\tau^2}{N\tau^2+\theta^2}\right)^2
\hspace{.2in} \text{ and } \hspace{.2in} 
n = \frac{\theta^2\tau^2}{N\tau^2+\theta^2} +1 \; .
\]
Finally, $C_d := \{y: g(y) \le d\} = [-\sqrt{d-1}, \sqrt{d-1}]$, hence is compact for all $d>0$.  It follows from Theorem \ref{thm:recipes} that the CGS, RQGS, and RSGS for the Normal-Normal model are geometrically ergodic.

Though the CGS, RQGS, and RSGS are each geometrically ergodic, their exact convergence rates may differ.
We explore these discrepancies and their impact on finite sample empirical performance  by comparing 
the CGS, RQGS for $q \in \{0.10, 0.25, 0.50, 0.75, 0.90\}$, and RSGS for $p \in \{0.10, 0.25, 0.50, 0.75, 0.90\}$ within  two different parameter settings:
\begin{center}
\begin{tabular}{|c||l|l|l||l|l|l|l|}
\hline\noalign{\smallskip}
Setting & $N$ & $\theta^2$  & $\tau^2$  & Var($X_i$) & Var($Y$) & Cor($X_i,X_j$) & Cor($X_i,Y$) \\
\noalign{\smallskip}\hline\noalign{\smallskip}
(1)        & 10  &  1                & 1              & 2  & 1       & 0.5  & 0.707 \\
(2)        & 10  &  1                & 0.1           & 1.1 & 0.1  & 0.091 & 0.302 \\
\noalign{\smallskip}\hline
\end{tabular}
\end{center}
where  the variance and correlation coefficients follow from \eqref{eq:nn}.  
Before presenting our results, we remind the reader that GS convergence and performance depend both on scanning strategy and target distribution.  Thus the comparisons we make between the GS below should not be generalized far beyond the specific Normal-Normal settings studied here.

To begin, consider one long run of each GS in both settings.  Starting from $\left(X^{(0)},Y^{(0)}\right) = 0_{11}$, we independently ran the CGS and RQGS for $10^5$ iterations and RSGS for $2*10^5$ iterations since, again, RSGS requires at least twice as many iterations as the CGS and RQGS to obtain the same number of $X$ and $Y$ updates.
Trace plots of the final 1000 $Y$ iterations  for selected GS in Setting (1) are included in Figure \ref{fig:trace1}. The trace behavior is similar for the CGS and RQGS under the extreme settings of $q = 0.1$ and $q=0.9$.   On the other hand, as expected, the RSGS $Y$ sub-chain appears to mix more slowly than for CGS and RQGS both when $p=0.1$ ($Y$ is updated frequently) and, even worse, when $p=0.9$ ($Y$ is updated infrequently).
\begin{figure*}[h]\centering
 \subfigure[]{\includegraphics[height=3in,width=3in]{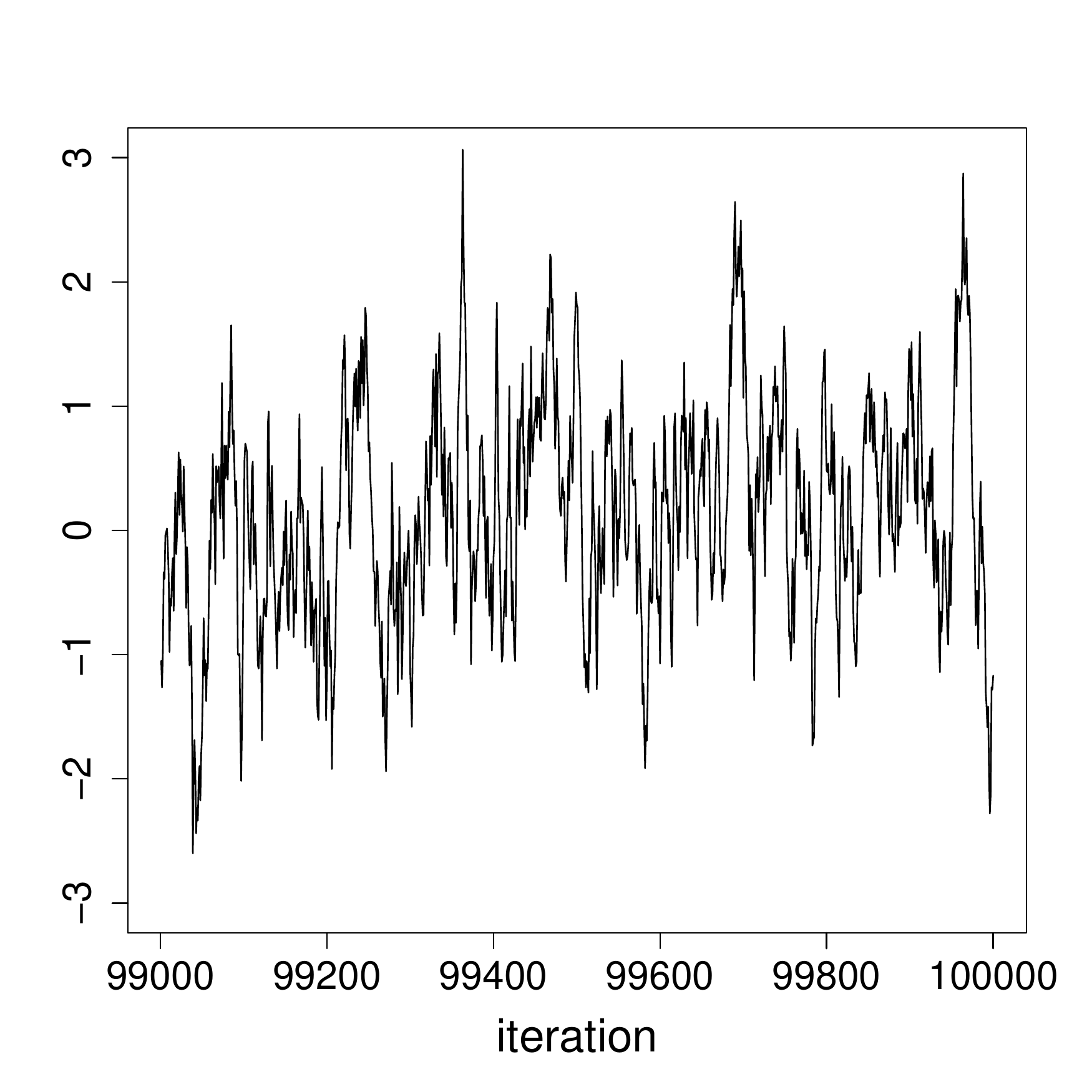}}
 \subfigure[]{\includegraphics[height=3in,width=3in]{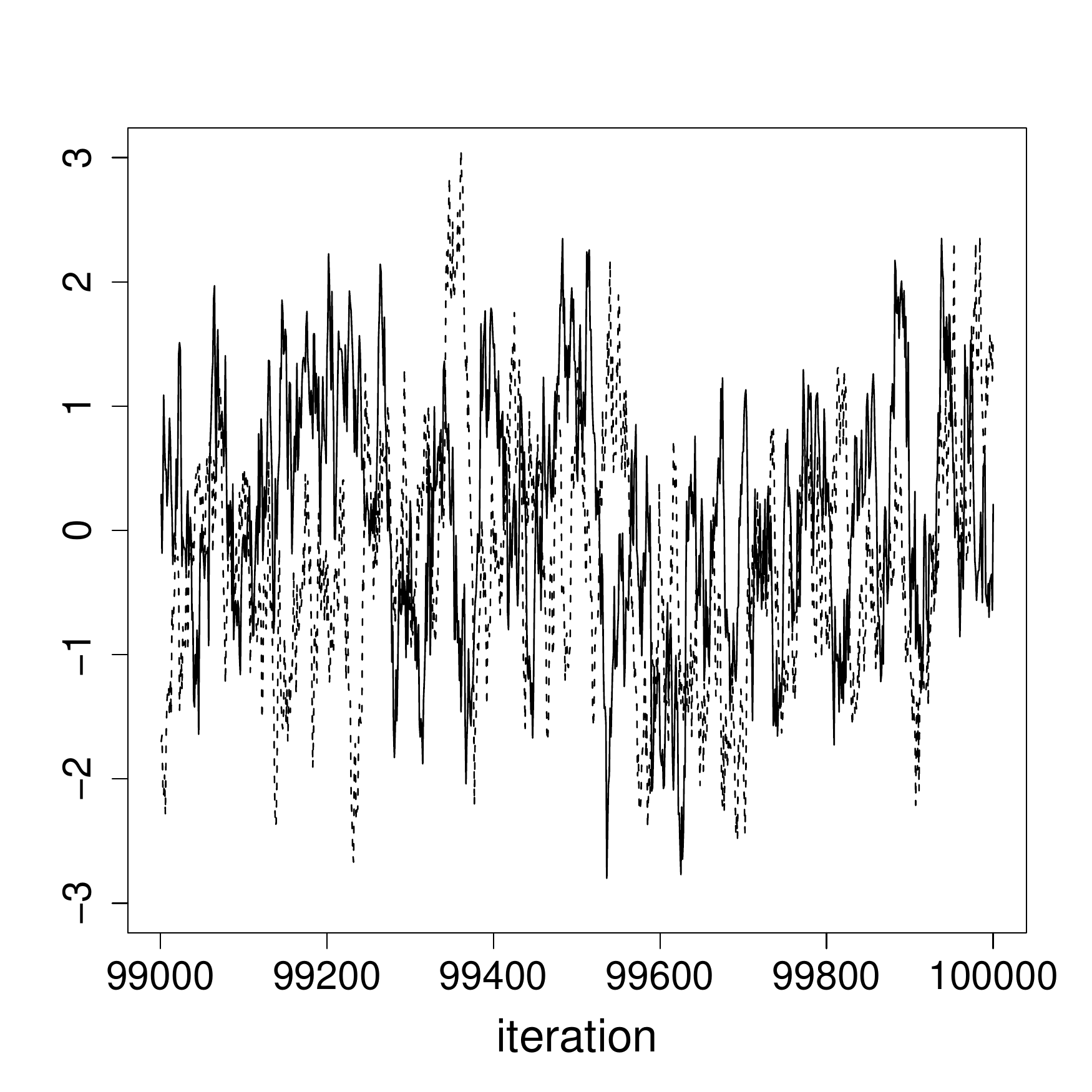}} 
 \subfigure[]{\includegraphics[height=3in,width=3in]{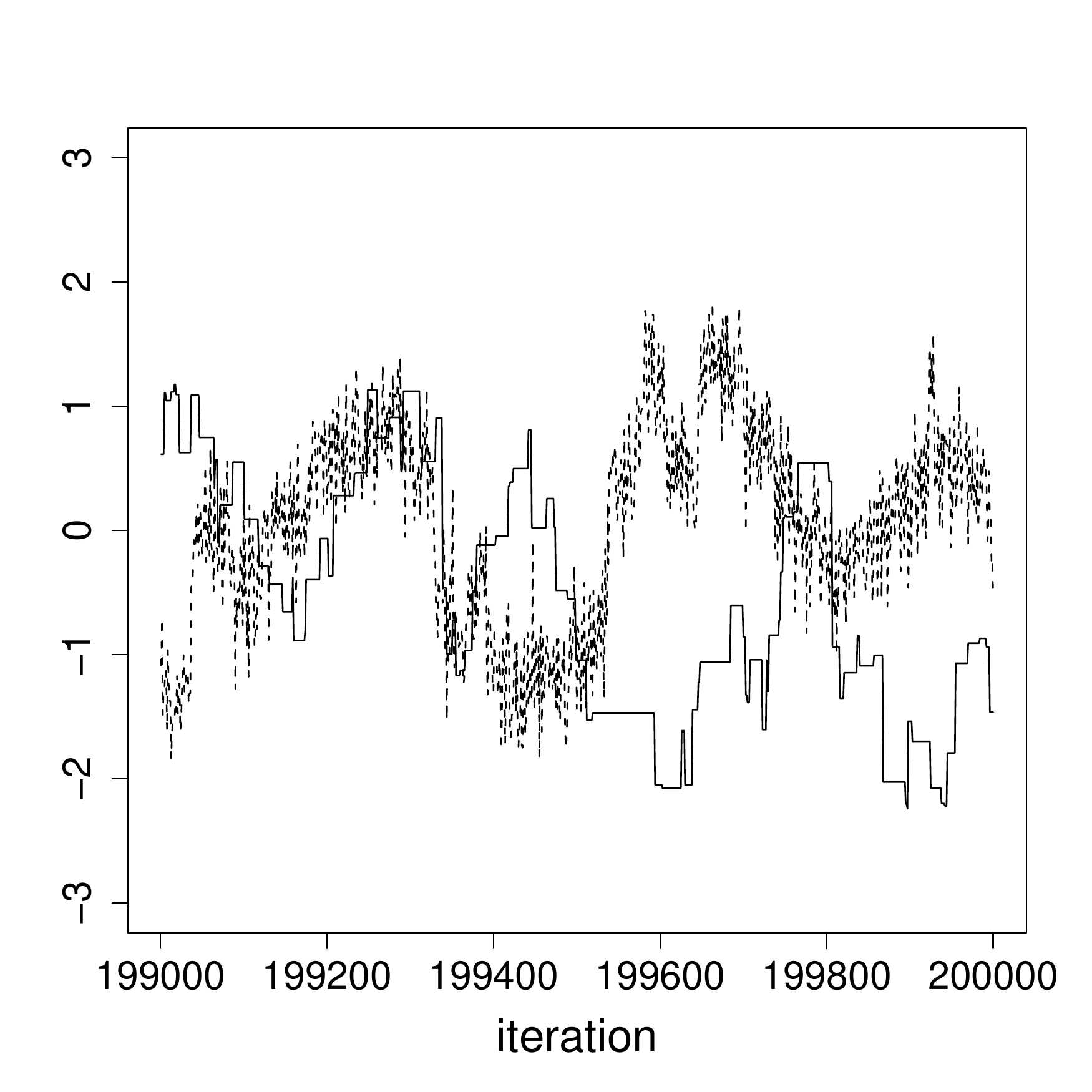}}
 \caption{Trace plots of GS for $Y$ in the Normal-Normal model of Section \ref{sec:nn}.  Shown are the last 1000 iterations of (a) $10^5$ CGS iterations, (b) $10^5$ RQGS iterations under $q=0.1$ (dashed) and $q=0.9$ (solid), and (c) $2*10^5$ RSGS iterations under $p=0.1$ (dashed) and $p=0.9$ (solid).}
 \label{fig:trace1}
\end{figure*}

More formally, we can compare GS efficiency  relative to the estimation of $E(Y)=0$. Since $E(Y^4) < \infty$, geometric ergodicity guarantees the existence of a Markov chain CLT for the Monte Carlo average $\overline{Y} = \sum_{i=0}^{n-1} Y^{(i)}$,
\[
\sqrt{n}(\overline{Y} - E(Y)) \stackrel{d}{\to} N(0,\sig^2_{\overline{Y}}) \; ,
\]
along with a consistent estimator of $\sig^2_{\overline{Y}}$, $\hat{\sig}^2_{\overline{Y}}$, via batch means methods.   
Thus an asymptotically valid 95\% confidence interval (CI) for $E(Y)$ can be calculated by 
\[
\overline{Y} \pm 1.960 \frac{\hat{\sig}_{\overline{Y}}}{\sqrt{n^*}}
\]
where $n^*$ denotes the MCMC simulation length ($n^*=10^5$ for CGS and RQGS and $n^*=2*10^5$ for RSGS).  Further,
the
integrated autocorrelation time
\[
\text{ACT} = \frac{\sigma^2_{\overline{Y}}}{\text{Var}(Y)} = \frac{\sigma^2_{\overline{Y}}/n^*}{\text{Var}(Y)/n^*} 
\]
can be consistently estimated by $\widehat{\text{ACT}} = \hat{\sigma}^2_{\overline{Y}}/\text{Var}(Y)$ and 
provides a measure of the GS efficiency relative to that of a random
sample from the Normal-Normal model.
Specifically,  the ACT indicates the number of GS iterations required for each random sample draw in order to achieve the same level of precision in estimating $E(Y)$.  

The 95\% confidence intervals and ACT's for each GS in Settings (1) and (2) can be found in Tables \ref{tab1} and \ref{tab2}, respectively.  Across GS scanning strategy, the CI half-widths and ACT's are larger in Setting (1) than in Setting (2).  This is to be expected since the variance of $Y$ and its correlation with $X$ are larger in Setting (1).
Further, comparisons of the GS empirical performances are similar within both settings and, interestingly, reflect the drift rate comparison discussion following Lemma \ref{lem:recipes}.  First, consider the comparisons between CGS, RQGS, and RSGS.  Nearly without exception, the CI half-widths and ACT's are substantially larger for the RSGS than the RQGS which are slightly larger than, but roughly comparable to, those for the CGS.  These patterns suggest that, relative to the estimation of $E(Y)$, CGS has a slight edge over RQGS and both are substantially more efficient than RSGS.  Next, consider the impact of selection probabilities $q$ and $p$ on the efficiencies of RQGS and RSGS, respectively.  
Within the RQGS, the CI half-widths and ACT's tend to decrease at a similar rate as $q$ nears 0 or 1.  In other words, the RQGS is more efficient when either one of the update orders is heavily favored over the other (ie.~when it behaves most like CGS).
On the other hand, RSGS efficiency appears to improve as $p$ nears 0.5, that is, when $X$ and $Y$ are updated at a roughly similar rate.
It is also interesting to note that, in both Settings (1) and (2), RSGS performs relatively better when $p$ is small (ie.~$Y$ is updated frequently) than when $p$ is large (ie.~$Y$ is updated infrequently).  Thus in this specific Normal-Normal setting, there does not seem to be an advantage to increasing the frequency of $Y$ updates as $\text{Var}(Y)$ (and $\text{Cor}(X_i,Y)$) increases.

Finally, we compare the quality of the Monte Carlo averages $\overline{Y}$ in estimating $E(Y)$ using mean squared error
\[
\text{MSE}(\overline{Y}) = E(\overline{Y} - E(Y))^2 = E(\overline{Y})^2 \; .
\]
To estimate the MSEs, for each GS within each parameter setting, we performed 1000 independent runs of either $10^4$ iterations each (CGS and RQGS) or $2*10^4$ iterations each (RSGS) and recorded the resulting independent estimates $\lb \overline{Y}^{(1)}, \overline{Y}^{(2)}, \ldots, \overline{Y}^{(1000)} \rb$.   From these, we estimate MSE by
\[
\widehat{\text{MSE}}(\overline{Y}) = \frac{1}{1000}\sum_{i=1}^{1000} \left( \overline{Y}^{(i)}\right)^2 \; .
\]
The results are reported in Tables \ref{tab1} and \ref{tab2} as MSE ratios relative to CGS
\[
\frac{\widehat{\text{MSE}}_{RQGS,q}(\overline{Y}) }{\widehat{\text{MSE}}_{CGS}(\overline{Y}) }
\hspace{.25in} \text{ and } \hspace{.25in} 
\frac{\widehat{\text{MSE}}_{RSGS,p}(\overline{Y}) }{\widehat{\text{MSE}}_{CGS}(\overline{Y}) } 
\]
where $\widehat{\text{MSE}}_{CGS}(\overline{Y})$ equals 0.00197 in Setting (1) and 0.0000295 in Setting (2).  
 Examination of the MSE ratios produces conclusions compatible with those from the CI half-widths and ACT's.  
 Mainly, CGS edges out RQGS (ie.~all ratios are greater than 1) and both are substantially more efficient than RSGS.  Further, RQGS is most efficient under $q$ values near 0 or 1 and RSGS is most efficient when $p=0.5$.

\begin{table}[H] \centering
 \caption{Summary of GS for the Normal-Normal model of Section \ref{sec:nn} under Setting (1).  
The 95\% CI's and ACT's are calculated from single independent runs of the GS. MSE ratios relative to the CGS MSE of 0.00197 are estimated from 1000 independent runs of each GS. Standard errors for the MSE ratios are in parentheses.  
}
 \label{tab1}
\begin{tabular}{|l||r|r||r|}
\hline\noalign{\smallskip}
Algorithm   & \multicolumn{1}{c}{95\% CI} & \multicolumn{1}{|c||}{ACT} & \multicolumn{1}{|c|}{MSE Ratio }\\
\noalign{\smallskip}\hline\noalign{\smallskip}
CGS              & 0.0067 $\pm$ 0.0273 & 19.289  &   \multicolumn{1}{|c|}{1} \\
\noalign{\smallskip}\hline\noalign{\smallskip}
\multicolumn{1}{|r||}{RQGS $\;\; q=0.10$} & 0.0054   $\pm$ 0.0287 & 21.241 & 1.217 (0.078)\\
 \multicolumn{1}{|r||}{$q=0.25$} & 0.0029  $\pm$ 0.0286 & 21.171  & 1.250 (0.081) \\
 \multicolumn{1}{|r||}{ $q=0.50$} & 0.0153  $\pm$ 0.0343 & 30.382  & 1.365 (0.086)\\
  \multicolumn{1}{|r||}{$q=0.75$} & 0.0167 $\pm$ 0.0289 & 21.554  & 1.206 (0.078)\\
 \multicolumn{1}{|r||}{ $q=0.90$} & -0.0188 $\pm$ 0.0274  & 19.459  & 1.158 (0.075)\\
\noalign{\smallskip}\hline\noalign{\smallskip}
 \multicolumn{1}{|r||}{RSGS  $\;\; p=0.10$} & -0.0132 $\pm$ 0.0567 & 166.592 & 5.701  (0.368) \\
 \multicolumn{1}{|r||}{ $p=0.25$} & 0.0223  $\pm$ 0.0425 & 93.622  & 2.705 (0.178) \\
 \multicolumn{1}{|r||}{ $p=0.50$} & 0.0177   $\pm$ 0.0379 & 74.480 & 2.178 (0.138) \\
 \multicolumn{1}{|r||}{ $p=0.75$} & -0.0485 $\pm$ 0.0440 & 100.246  & 2.662 (0.171)\\
 \multicolumn{1}{|r||}{ $p=0.90$} & 0.0033 $\pm$ 0.0589 & 179.441  & 6.279 (0.409)\\
\noalign{\smallskip}\hline
\end{tabular}
\end{table}

\begin{table}[H] \centering
 \caption{Summary of GS for the Normal-Normal model of Section \ref{sec:nn} under Setting (2).  
The 95\% CI's and ACT's are calculated from single independent runs of the GS. MSE ratios relative to the CGS MSE of 0.0000295 are estimated from 1000 independent runs of each GS. Standard errors for the MSE ratios are in parentheses. }
 \label{tab2}
\begin{tabular}{|l||r|r||r|}
\hline\noalign{\smallskip}
Algorithm & \multicolumn{1}{|c|}{95\% CI} & \multicolumn{1}{|c||}{ACT} & \multicolumn{1}{|c|}{MSE Ratio} \\
\noalign{\smallskip}\hline\noalign{\smallskip}
CGS              & -0.0004 $\pm$ 0.0034 &  2.996 &   \multicolumn{1}{|c|}{1} \\
\noalign{\smallskip}\hline\noalign{\smallskip}
\multicolumn{1}{|r||}{RQGS $\;\; q=0.10$} & 0.0002   $\pm$ 0.0035 & 3.103 & 1.060 (0.068) \\
\multicolumn{1}{|r||}{$q=0.25$} & 0.0010  $\pm$ 0.0034  & 3.068 & 1.133 (0.070) \\
\multicolumn{1}{|r||}{$q=0.50$} & -0.0027  $\pm$ 0.0039 & 4.006 & 1.264 (0.078) \\
\multicolumn{1}{|r||}{$q=0.75$} & -0.0003 $\pm$ 0.0036 & 3.328 & 1.034 (0.063) \\
\multicolumn{1}{|r||}{$q=0.90$} & -0.0024 $\pm$ 0.0034  & 2.935 & 1.050 (0.063) \\
\noalign{\smallskip}\hline\noalign{\smallskip}
\multicolumn{1}{|r||}{RSGS $\;\;p= 0.10$} & 0.0029 $\pm$ 0.0063 & 20.274 & 3.831 (0.228) \\
\multicolumn{1}{|r||}{$p=0.25$} & 0.0024 $\pm$ 0.0047 & 11.668 & 2.094 (0.131) \\
\multicolumn{1}{|r||}{$p=0.50$} & -0.0005   $\pm$ 0.0045 & 10.350 & 2.063 (0.126) \\
\multicolumn{1}{|r||}{$p=0.75$} & -0.0030 $\pm$ 0.0059 & 17.983 & 3.077 (0.184) \\
\multicolumn{1}{|r||}{$p=0.90$} & 0.0004 $\pm$ 0.0087 & 39.203 & 7.022 (0.440) \\
\noalign{\smallskip}\hline
\end{tabular}
\end{table}



\subsection{A Bayesian General Linear Model}

\cite{john:jone:2010} establish geometric ergodicity for the CGS for a popular Bayesian general linear model.  Thus by Theorem \ref{thm:all},  the RQGS and RSGS are also geometrically ergodic.  An inspection of their proofs shows that the authors establish these results using the same techniques as those outlined by Theorem \ref{thm:recipes}.  For ease of exposition, we illustrate this approach for a (very) special case of this model, a Bayesian balanced random intercept model for $K$ subjects with $M$ observations on each.  
Specifically, let $Y$ denote an $N \times 1$ response
vector, $\beta$ a $p \times 1$ vector of regression
coefficients, and $u$ a $K \times 1$ vector.  Further, let $X$ be an $N \times
p$ design matrix of full column rank and $Z = I_K \otimes 1_M$ where $\otimes$ denotes the Kronecker product and $1_M$ is an $M \times 1$ vector of ones. Then the model is 
\begin{equation}\label{eq:hier}
\begin{split}
Y|\beta,u,\lambda_R,\lambda_D  & \sim \text{N}_N\left(X\beta + Zu,
\lambda_R^{-1}I_{N}\right) \\ 
\beta| u,\lambda_R,\lambda_D   & \sim
\text{N}_p\left(0,I_p\right)\\   
u|\lambda_R,\lambda_D          & \sim
\text{N}_K\left(0,\lambda_D^{-1}I_{K}\right) \\ 
\lambda_R     & \sim \text{Gamma}\left(2,1\right) \\ 
\lambda_D     & \sim \text{Gamma}\left(2,1\right) \\ 
\end{split}
\end{equation}
where we say $W \sim \text{Gamma}(2,1)$ if it has density proportional
to $we^{-w}$ for $w > 0$.  We also assume that $\beta$ and $u$
are conditionally independent given $\lambda_R$,
$\lambda_D$, and $y$ (ie.~$X^TZ=0$).

We can explore the posterior distribution of $\beta$, $u$, $\lambda_R$, and $\lambda_D$ given data $y$  using a two-component GS with components $\xi=\left(u^T,\beta^T\right)^T$ and $\lambda = \left(\lambda_R,
  \lambda_D \right)^T$.  Constructing the corresponding Markov chain $\Phi = \lb\left(\lambda^{(i)},\xi^{(i)} \right) \rb_{i=0}^\infty$ requires draws from the following full conditional distributions.  Letting $v_1(\xi) = (y-X\beta-Zu)^T(y-X\beta-Zu)$ and $v_2(\xi) = u^Tu$,
\[
\lambda | \xi, y  \sim \text{Gamma}\left(2 + \frac{N}{2}, \; 1 +  \frac{1}{2}v_1(\xi)\right)  \cdot \text{Gamma}\left(2 + \frac{K}{2}, \; 1 +  \frac{1}{2}v_2(\xi)\right) \; .
\]
That is, the conditional distribution of $\lambda=(\lambda_R,\lambda_D)$ given $(\xi,y)$ is the product of two independent Gamma distributions.  Further,
\[
\xi | \lambda, y  \sim N_{K+p}\left(\mu, \Sigma^{-1} \right) 
\]
where
\begin{equation}\label{eq:norm1}
\begin{split}
\Sigma^{-1} & = \left( \begin{array}{cc}
\left(\lambda_RM+ \lambda_D\right)^{-1}I_K & 0 \\
0               & \left(\lambda_RX^TX+I_p\right)^{-1} \\
\end{array} \right) \;\; \text{ and } \\
\mu & = \lambda_R \Sigma^{-1} \left( \begin{array}{c}
Z^Ty \\
X^Ty \\ 
\end{array} \right) \; .\\
\end{split}
\end{equation}
Accordingly, the GS and posterior density satisfy {\it Assumption $\mathcal{A}$} and the following Lemma establishes the sufficient conditions required by Theorem \ref{thm:recipes}.  Geometric ergodicity of the CGS, RQGS, and RSGS follows.  Please see \cite{john:jone:2010} for a proof of the Lemma.
\begin{lemma}\label{lem:bayes}
Define
\[
\begin{split}
f(\lambda) & = K \left(\frac{1}{\lambda_R} + \frac{1}{\lambda_D} \right) + e^{(\lambda_R+\lambda_D)/2} + 1 \\
g(\xi) & = v_1(\xi) + v_2(\xi) + 1  \; .\\
\end{split}
\]
These functions satisfy \eqref{eq:fg}  with $j = K/(2+K)$, $m=1$, and 
\[
\begin{split}
k & = \frac{2K}{N+2} + \frac{2K}{K+2} + 2^{K/2+2} + 2^{N/2+2} \\
n & = \sum_{i=1}^N x_i x_i^T + y^T\left(I_N + \frac{1}{M^2}ZZ^T\right)y \\
\end{split}
\]
where $x_i$ denotes the $i$th row of $X$.  Further, $C_d := \{\xi: g(\xi) \le d\}$ is compact for all $d>0$.  

\end{lemma}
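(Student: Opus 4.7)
The plan is to verify all three ingredients of Theorem \ref{thm:recipes}: the two cyclic moment inequalities in \eqref{eq:fg} and compactness of the sublevel sets of $g$. The choices of $f$ and $g$ are dictated by the structure of the full conditionals. Given $\xi$, the components $\lambda_R,\lambda_D$ are independent Gammas whose rates are affine in $v_1(\xi)$ and $v_2(\xi)$; given $\lambda$, the vector $\xi$ is Normal with block-diagonal precision (since $X^TZ=0$) that decays in $\lambda_R,\lambda_D$. Thus linear combinations of $v_1,v_2$ should map to linear combinations of $1/\lambda_R,1/\lambda_D$ under the GS dynamics, and vice versa, which is exactly what $f$ and $g$ encode. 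The exponential term in $f$ plays no role on the $E[g\mid\lambda]$ side; it is there so that $E[f\mid\xi]$ has a bounded, $\xi$-free ``constant'' contribution.

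For $E[f(\lambda)\mid\xi]\le jg(\xi)+k$, I would invoke the Gamma$(\alpha,\beta)$ identities $E(1/W)=\beta/(\alpha-1)$ and $E(e^{tW})=(\beta/(\beta-t))^\alpha$ for $t<\beta$. Applied to $\lambda_R\sim\text{Gamma}(2+N/2,\,1+v_1(\xi)/2)$ and $\lambda_D\sim\text{Gamma}(2+K/2,\,1+v_2(\xi)/2)$, these produce $E[1/\lambda_R\mid\xi]=(2+v_1(\xi))/(N+2)$, $E[1/\lambda_D\mid\xi]=(2+v_2(\xi))/(K+2)$, and exponential-moment contributions bounded by $2^{N/2+2}$ and $2^{K/2+2}$ respectively, using $(2+v)/(1+v)\le 2$. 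The coefficient of $v_1(\xi)$ in the resulting upper bound is $K/(N+2)$, which is at most $K/(K+2)=j$ since the balanced design forces $N=KM\ge K$; the coefficient of $v_2(\xi)$ is exactly $j$. The remaining terms collect into $k$.

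For $E[g(\xi)\mid\lambda]\le f(\lambda)+n$, I would use the Normal structure in \eqref{eq:norm1} together with the identity $E[W^TAW]=\mu^TA\mu+\text{tr}(A\Sigma)$. For $v_2(\xi)=u^Tu$, this gives $E[v_2\mid\lambda]=\mu_u^T\mu_u+K/(\lambda_R M+\lambda_D)\le\mu_u^T\mu_u+K/\lambda_D$. Expanding $v_1(\xi)=(y-X\beta-Zu)^T(y-X\beta-Zu)$ and using $X^TZ=0$ to kill cross terms, the same identity yields a mean-squared piece plus $\text{tr}(X(\lambda_R X^TX+I_p)^{-1}X^T)+\text{tr}(ZZ^T)/(\lambda_R M+\lambda_D)$. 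The first trace is bounded by $\text{tr}(X^TX)=\sum_{i=1}^N x_ix_i^T$ via the PSD bound $(\lambda_R X^TX+I_p)^{-1}\preceq I_p$, and the second reduces to $MK/(\lambda_R M+\lambda_D)\le K/\lambda_R$ since $\text{tr}(ZZ^T)=MK$. Thus the trace contributions sum to precisely $K(1/\lambda_R+1/\lambda_D)$, giving $m=1$. The mean-squared terms $\mu_u^T\mu_u$ and $(y-X\mu_\beta-Z\mu_u)^T(y-X\mu_\beta-Z\mu_u)$ are bounded by data-dependent constants using $\lambda_R/(\lambda_R M+\lambda_D)\le 1/M$ and $\lambda_R X(\lambda_R X^TX+I_p)^{-1}X^T\preceq I_N$, yielding the $y^T(I_N+ZZ^T/M^2)y$ contribution to $n$.

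Finally, compactness of $C_d=\{\xi:g(\xi)\le d\}$ is straightforward: continuity of $g$ makes $C_d$ closed, $g(\xi)\le d$ bounds $v_2(\xi)=u^Tu$ (hence $u$), and the resulting bound on $v_1(\xi)$ combined with the full column rank of $X$ and bounded $u$ forces $\beta$ into a bounded set. The main obstacle will be coordinating the bounds in the second inequality so that the trace pieces land exactly on $K(1/\lambda_R+1/\lambda_D)$ with $m=1$; this is where the specific coefficient $K$ in $f$, the balanced design, and the orthogonality $X^TZ=0$ all come together. With these bounds in hand, Theorem \ref{thm:recipes} immediately yields geometric ergodicity of the CGS, RQGS, and RSGS for this model.
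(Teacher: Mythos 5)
The paper does not actually prove this lemma---it defers to Johnson and Jones (2010)---and your route, computing conditional moments directly from the Gamma and Normal full conditionals, is precisely the standard argument used there, so there is no methodological divergence to report. Your verification of $E[g(\xi)\mid\lambda]\le f(\lambda)+n$ is sound: the identity $E[W^TAW]=\mu^TA\mu+\mathrm{tr}(A\Sigma)$, the bounds $\lambda_R/(\lambda_RM+\lambda_D)\le 1/M$, $\mathrm{tr}\bigl((\lambda_RX^TX+I_p)^{-1}X^TX\bigr)\le\mathrm{tr}(X^TX)$ and $MK/(\lambda_RM+\lambda_D)\le K/\lambda_R$, and the use of $X^TZ=0$ so that the two hat-type matrices act on orthogonal subspaces, all land correctly on $m=1$ and the stated $n$. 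The compactness argument (closedness from continuity, $u$ bounded by $v_2\le d-1$, then $\beta$ bounded via full column rank of $X$) is also fine, as is the observation that $j$ may be taken as $\max\{K/(N+2),\,K/(K+2)\}=K/(K+2)$ because the balanced design forces $N=KM\ge K$.

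One arithmetic point in the first inequality needs correction. Since $\lambda_R$ and $\lambda_D$ are conditionally independent given $\xi$,
\[
E\bigl[e^{(\lambda_R+\lambda_D)/2}\mid\xi\bigr]=E\bigl[e^{\lambda_R/2}\mid\xi\bigr]\,E\bigl[e^{\lambda_D/2}\mid\xi\bigr]\le 2^{N/2+2}\cdot 2^{K/2+2}=2^{(N+K)/2+4}\,,
\]
a \emph{product} of the two moment-generating-function bounds, not their sum. Your write-up treats the two factors as additive contributions to $k$, matching the lemma's stated constant $2^{K/2+2}+2^{N/2+2}$; but with $f$ as written that constant is too small (take $v_1=v_2=0$, where both MGF bounds are attained). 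Either the exponential contribution to $k$ must be $2^{(N+K)/2+4}$, or the drift function is meant to contain $e^{\lambda_R/2}+e^{\lambda_D/2}$ in place of $e^{(\lambda_R+\lambda_D)/2}$, in which case the stated sum is exactly right. This is harmless for the conclusion---Theorem \ref{thm:recipes} needs only $jm<1$ together with finite $k$ and $n$---but if your goal is to verify the lemma's constants as stated, this term does not check out and should be flagged rather than folded silently into ``the remaining terms collect into $k$.''
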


Under geometric ergodicity, inference for the posterior can be guided by the existence of a CLT and consistent estimates of Monte Carlo standard errors.  For details, examples, and further study of the convergence among the GS for this model, please see \cite{john:jone:2010} and \cite{john:jone:neat:2013}.

\section{Appendix}

\subsection{Preliminaries}

The following lemmas are applied extensively throughout the appendix.  The first provides the notation and structure required for constructing CGS, RQGS, and RSGS drift conditions.

\begin{lemma}\label{lem:operator}
Denote expectation with respect to the full conditional distributions as
\[
\begin{split}
E[f(x,y)|x] & : = \int f(x,y) \pi(y|x) \mu_y(dy) \\
E[f(x,y)|y] & : = \int f(x,y) \pi(x|y) \mu_x(dx) \\
\end{split} \;.
\]
Then for any function $f(x,y)$, 
\[
\begin{split}
P_{CGS}f(x,y) & = E\left[E\left[f(x',y') \; \vline \; x'\right] \;\vline \; y \right] \\
P_{RQGS,q}f(x,y) & = qE\left[E\left[f(x',y') \; \vline \; x'\right] \;\vline \; y \right] + (1-q) E\left[E\left[f(x',y') \; \vline \; y'\right] \;\vline \; x \right] \\
P_{RSGS,p}f(x,y) & = p E\left[f(x',y) \; \vline \; y \right] + (1-p)E\left[f(x,y') \; \vline \; x \right] \; .\\
\end{split}
\]
\end{lemma}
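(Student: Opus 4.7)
The plan is to obtain each identity by direct substitution of the Markov transition density $k$ (CGS, RQGS, or RSGS) into the operator definition \eqref{eq:operator}, followed by an application of Fubini--Tonelli to arrange the double integral as an iterated conditional expectation in the notation introduced at the top of the lemma.

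For the CGS identity, I would substitute $k_{CGS}((x,y),(x',y')) = \pi(x'|y)\pi(y'|x')$ into \eqref{eq:operator} to get
\[
P_{CGS}f(x,y) = \iint f(x',y')\,\pi(x'|y)\,\pi(y'|x')\,\mu_x(dx')\mu_y(dy').
\]
Since $\pi(x'|y)$ does not depend on $y'$, Fubini allows me to do the $y'$ integral first. The inner integral $\int f(x',y')\pi(y'|x')\mu_y(dy')$ is, by the lemma's notation, $E[f(x',y')\mid x']$ viewed as a function of $x'$. Integrating the result against $\pi(x'|y)\mu_x(dx')$ then yields the outer $E[\,\cdot\mid y]$, giving the claimed formula.

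The RQGS case follows by linearity: the kernel $k_{RQGS,q}$ is a convex combination of $\pi(x'|y)\pi(y'|x')$ and $\pi(y'|x)\pi(x'|y')$, so the previous argument applied to each summand (with the roles of $x$ and $y$ swapped in the second term) immediately produces the $q$ and $1-q$ pieces. For the RSGS case I would substitute $k_{RSGS,p}((x,y),(x',y')) = p\pi(x'|y)\delta(y'-y) + (1-p)\pi(y'|x)\delta(x'-x)$ into \eqref{eq:operator}; the sifting property of the Dirac delta collapses the $y'$ integration in the first term (fixing $y'=y$) and the $x'$ integration in the second (fixing $x'=x$), leaving $p\int f(x',y)\pi(x'|y)\mu_x(dx') + (1-p)\int f(x,y')\pi(y'|x)\mu_y(dy')$, which is exactly $pE[f(x',y)\mid y] + (1-p)E[f(x,y')\mid x]$.

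There is no substantive obstacle here; the result is a bookkeeping identity unwinding the kernel definitions of Section \ref{sec:background}. The only point requiring a word of care is the use of Fubini, which is justified for nonnegative $f$ without further hypotheses, and more generally whenever the double integral is absolutely convergent (as is implicit in applying $P$ to $f$ as an operator). For expository purposes I would present the CGS derivation in full and remark that RQGS follows by linearity and RSGS by the defining property of $\delta$.
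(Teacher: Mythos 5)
Your proposal is correct and matches the paper's own proof essentially step for step: direct substitution of each Mtd into \eqref{eq:operator}, rearrangement of the iterated integral for CGS (with RQGS handled by linearity) and the sifting property of the Dirac delta for RSGS. The only difference is your explicit remark about Fubini--Tonelli, which the paper leaves implicit.
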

\begin{proof}
First, it follows from \eqref{eq:operator} that
\[
\begin{split}
P_{CGS}f(x,y) 
& = \int \int f(x',y') \pi(x'|y)\pi(y'|x') \mu_x(dx')\mu_y(dy') \\
& = \int \left[\int f(x',y') \pi(y'|x') \mu_y(dy') \right] \pi(x'|y) \mu_x(dx') \\
& = E\left[E\left[f(x',y') \; \vline \; x'\right] \;\vline \; y \right] \\
\end{split}
\]
and the RQGS proof is similar. 
Finally, 
\[
\begin{split}
 P_{RSGS,p}f(x,y) 
 & =  p \int \int f(x',y')\pi(x'|y)\delta(y'-y)\mu_x(dx')\mu_y(dy')  \\
 & \hspace{.25in} + (1-p)\int \int f(x',y')\pi(y'|x)\delta(x'-x)\mu_x(dx')\mu_y(dy') \\
 & = p \int f(x',y)\pi(x'|y)\mu_x(dx') + (1-p) \int f(x,y')\pi(y'|x)\mu_y(dy') \\
& = p E\left[f(x',y) \; \vline \; y \right] + (1-p)E\left[f(x,y') \; \vline \; x \right] \; .\\
\end{split}
\]

\end{proof}

We will use the following Lemma to establish that our drift functions are unbounded off compact sets.
\begin{lemma}\label{lem:compact}
Suppose {\it Assumption $\mathcal{A}$} holds and function $V: \mathsf{X} \times \mathsf{Y} \to [1,\infty)$ is unbounded off compact sets.  
Then $\tilde{V}_1: \mathsf{X}\times\mathsf{Y} \to [1,\infty)$ and  $\tilde{V}_2: \mathsf{X}\times\mathsf{Y} \to [1,\infty)$ are also unbounded off compact sets where
\[
\begin{split}
\tilde{V}_1(x,y) & = u E[V(x,y')| x] + v E[ V(x',y)|y] \\
\tilde{V}_2(x,y) & = u E[E[V(x',y')|y']|x] + v E[E[V(x',y')|x']|y] + wE[V(x',y)|y]\\
\end{split}
\]
for $u,v,w \ge 0$ such that $u+v+w > 0$.
\end{lemma}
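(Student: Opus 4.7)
The plan is to verify that each $\tilde V_i$ has compact sublevel sets by separately establishing (i) lower semicontinuity, which gives closed sublevel sets, and (ii) boundedness of the sublevel sets in the joint space $\mathsf{X}\times\mathsf{Y}$. Since we are in a finite-dimensional Euclidean setting, closed plus bounded yields compactness.

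The main engine is an auxiliary claim about single partial conditional expectations: if $W:\mathsf{X}\times\mathsf{Y}\to[1,\infty)$ is unbounded off compact sets, then $h_x(x):=\int W(x,y')\pi(y'|x)\mu_y(dy')$ is lower semicontinuous with compact sublevel sets on $\mathsf{X}$, and symmetrically for $h_y(y):=\int W(x',y)\pi(x'|y)\mu_x(dx')$ on $\mathsf{Y}$. For the lsc half I would note first that $W$ is itself lsc (its sublevel sets are compact, hence closed), then combine condition (c) of Assumption $\mathcal{A}$, which gives $\liminf_n \pi(y'|x_n)\ge \pi(y'|x)$ whenever $x_n\to x$, with the elementary inequality $\liminf_n a_nb_n\ge (\liminf_n a_n)(\liminf_n b_n)$ for nonnegative sequences, and apply Fatou's lemma. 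For the boundedness half I would use a projection trick: for any $d>0$ the set $K_d:=\{W\le d\}$ is compact, so its projection $\mathrm{proj}_x(K_d)\subseteq\mathsf{X}$ is compact; for $x\notin \mathrm{proj}_x(K_d)$, $W(x,y')>d$ for every $y'$, so integrating against $\pi(y'|x)$ forces $h_x(x)>d$, giving $\{h_x\le d\}\subseteq\mathrm{proj}_x(K_d)$.

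Granted the auxiliary claim, write $h_1(x):=E[V(x,y')|x]$ and $h_2(y):=E[V(x',y)|y]$. Then $\tilde V_1=uh_1+vh_2$ is lsc as a sum of lsc functions, and whenever $u,v>0$ the containment $\{\tilde V_1\le d\}\subseteq \{h_1\le d/u\}\times\{h_2\le d/v\}$ exhibits its sublevel set as a closed subset of a product of compact sets, hence compact (the degenerate case where a coefficient vanishes reduces to the auxiliary claim alone on the surviving coordinate). For $\tilde V_2$ I would iterate the auxiliary claim: the inner expectations $H_1(y'):=E[V(x',y')|y']$ and $H_2(x'):=E[V(x',y')|x']$ are lsc with compact sublevel sets on $\mathsf{Y}$ and $\mathsf{X}$ respectively by one application of the claim, a second round of Fatou-plus-projection handles the outer terms $G_1(x):=E[H_1(y')|x]$ and $G_2(y):=E[H_2(x')|y]$, and the remaining term $G_3(y):=E[V(x',y)|y]$ is handled directly by the claim. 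The three are then combined exactly as for $\tilde V_1$.

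The main obstacle I anticipate is the boundedness half for the nested term $G_1(x)$. Because $H_1$ depends only on $y'$, the naive projection trick does not apply directly to the function $(x,y')\mapsto H_1(y')$, which is not jointly coercive on $\mathsf{X}\times\mathsf{Y}$. Pushing coercivity through the outer conditional expectation therefore requires a careful second application of the auxiliary claim that tracks the $x$-dependence only through the conditional density $\pi(y'|x)$, together with the observation that the coefficient pattern in which $\tilde V_2$ arises in practice ensures the combination $uG_1+vG_2+wG_3$ is jointly coercive even though no individual nested summand need be. I would flag this as the delicate step of the argument and handle it last, reducing it to the straightforward cases via the same Fatou--projection pairing used throughout.
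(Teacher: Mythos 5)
Your skeleton (sublevel sets are closed and bounded, hence compact) and your closedness argument (lower semicontinuity of $V$ from its closed sublevel sets, condition (c) of Assumption $\mathcal{A}$ for the conditional densities, the product-of-liminfs inequality, then Fatou) coincide with the paper's. Where you genuinely diverge is the boundedness step. The paper proves $\tilde{V}_1(x,y)\ge c\,V(x,y)$ by restricting the conditional integrals to $A_{(x,y)}=\{\tx: V(\tx,y)\ge V(x,y)\}$ and $B_{(x,y)}=\{\ty: V(x,\ty)\ge V(x,y)\}$ and taking a minimum $c>0$ over the state space, concluding $\{\tilde{V}_1\le d\}\subseteq\{V\le d/c\}$. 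Your projection trick --- $\{h_x\le d\}\subseteq\mathrm{proj}_x(\{V\le d+1\})$, a continuous image of a compact set --- is more elementary and, for $\tilde{V}_1$ with $u,v>0$, arguably cleaner: it sidesteps the paper's delicate assertion that the infimum $c$ is strictly positive. That part of your proposal I would accept.

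There are, however, two genuine gaps. First, your dispatch of the degenerate case is wrong: if $v=0$ then $\{\tilde{V}_1\le d\}=\{x: u\,h_1(x)\le d\}\times\mathsf{Y}$, a cylinder that is unbounded whenever $\mathsf{Y}$ is, so there is no ``reduction to the surviving coordinate''; you need both coefficients multiplying coordinate-coercive terms to be positive (as they are in the paper's applications). Second, and more seriously, the $\tilde{V}_2$ case is not proved. You correctly identify the obstacle --- $G_1(x)=E[H_1(y')\,|\,x]$ with $H_1(y')=E[V(x',y')\,|\,y']$ depends on $x$ only through $\pi(\cdot\,|\,x)$, so the function $(x,y')\mapsto H_1(y')$ has no joint coercivity for the projection trick to exploit --- but then you resolve it by asserting that ``the coefficient pattern ensures the combination is jointly coercive,'' which is precisely the claim to be established, not an argument for it. Coercivity of $G_1$ in $x$ would require that $\pi(\cdot\,|\,x)$ push mass outside every compact subset of $\mathsf{Y}$ as $x$ leaves compact sets, and nothing in the hypotheses hands you that; the device you would need is an adaptation of the paper's lower-bounding-by-$cV$ argument to the iterated integral, which you have not carried out. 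As written, the $\tilde{V}_2$ half of the lemma remains unproved in your proposal (the paper itself waves this case off as ``similar,'' but a blind proof cannot lean on that).
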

\begin{proof}

Let $C_d := \{ (x,y):  V(x,y) \le d \}$ where, by assumption, $C_d$ is compact for all $d>0$.  To establish that $\tilde{V}_1$ is unbounded off compact sets, we will prove that the $D_d  :=  \{(x,y): \tilde{V}_1(x,y) \le d\} $
is also compact (ie.~closed and bounded) for all $d>0$.  

First, we show $D_{d}$ is closed.  Specifically, we show that if $\{(x_i,y_i)\}_{i=1}^\infty \subseteq D_d$ and $\lim_{n \to \infty} (x_n,y_n) = (x,y)$, then $(x,y)$ is also in $D_d$ (ie.~$\tilde{V}_1(x,y) \le d$).
To this end, notice that for all $(\tx,\ty) \in \mathsf{X}\times\mathsf{Y}$
\[
V(x,\ty) \le \liminf_{n\to\infty} V(x_n,\ty)  
\hspace{.2in} \text{ and } \hspace{.2in}
V(\tx,y) \le \liminf_{n\to\infty}V(\tx,y_n)  
\]
by the closedness of $C_d$ and, by {\it Assumption $\mathcal{A}$},
\[
\pi(\tilde{y}|x) \le \liminf_{n \to \infty} \pi(\ty|x_n)
\hspace{.2in} \text{ and } \hspace{.2in}
\pi(\tilde{x}|y) \le \liminf_{n \to \infty} \pi(\tx|y_n) \; .
\]
Thus
\[
\begin{split}
\tilde{V}_1(x,y) 
 & = u E[V(x,\ty)| x] + v E[ V(\tx,y)|y] \\
& =  u\int V(x, \ty) \pi(\ty|x) \mu_y(d\ty) + v \int V(\tx, y) \pi(\tx | y) \mu_x(d\tx) \\
& \le u\int \liminf_{n\to\infty}V(x_n, \ty) \liminf_{n\to\infty}\pi(\ty|x_n) \mu_y(d\ty) + v \int \liminf_{n\to\infty}V(\tx, y_n)\liminf_{n\to\infty} \pi(\tx | y_n) \mu_x(d\tx) \\
& \le  u\int \liminf_{n\to\infty}V(x_n, \ty) \pi(\ty|x_n) \mu_y(d\ty) + v \int \liminf_{n\to\infty}V(\tx, y_n) \pi(\tx | y_n) \mu_x(d\tx) \\
& \le \liminf_{n\to\infty}\left[ u\int V(x_n, \ty) \pi(\ty|x_n) \mu_y(d\ty) + v \int V(\tx, y_n) \pi(\tx | y_n) \mu_x(d\tx)\right] \\
& = \liminf_{n\to\infty} \tilde{V}_1(x_n,y_n) \\
& \le d \\
\end{split}
\]
where the penultimate inequality follows from Fatou's lemma and the final inequality is guaranteed by $\{(x_i,y_i)\}_{i=1}^\infty \subseteq D_d$.  

Next, we show $D_d$ is bounded.  To this end, consider two cases.  First, suppose $V(x,y)$ attains a finite maximum value $m = \max_{(x,y) \in \mathsf{X}\times\mathsf{Y}}\{V(x,y)\}<\infty$.  In this case, $\mathsf{X} \times \mathsf{Y} = \{(x,y): V(x,y) \le m\} = C_m$
which is bounded by assumption.  Since $D_d \subset \mathsf{X} \times \mathsf{Y}$ for all $d$, $D_d$ must also be bounded.
On the other hand, suppose $V(x,y)$ does {\it not} attain a finite maximum.  In this case, 
define
\[
A_{(x,y)} = \{\tilde{x}: V(\tilde{x},y) \ge V(x,y)\} 
\hspace{.2in}
\text{ and }
\hspace{.2in}
B_{(x,y)} = \{\tilde{y}: V(x, \tilde{y}) \ge V(x,y)\} 
\]
and notice that for any $(x,y) \in \mathsf{X}\times\mathsf{Y}$, either $A_{(x,y)}$ has positive measure with respect to $\mu_x(d\tx)$ or $B_{(x,y)}$ has positive measure with respect to $\mu_y(d\ty)$.  Thus, for all $(x,y)$
\[
\begin{split}
\tilde{V}_1(x,y) 
& = u\int V(x, \ty) \pi(\ty|x)  \mu_y(d\ty)+ v \int V(\tx, y) \pi(\tx | y)  \mu_x(d\tx) \\
& \ge u\int_{B_{(x,y)}} V(x, \ty) \pi(\ty|x) \mu_y(d\ty)  + v \int_{A_{(x,y)}}V(\tx, y) \pi(\tx | y)  \mu_x(d\tx) \\
& \ge V(x,y) \left[ u\int_{B_{(x,y)}} \pi(\ty|x) \mu_y(d\ty)  + v \int_{A_{(x,y)}}\pi(\tx | y)  \mu_x(d\tx) \right] \\
& \ge  c V(x,y)   \\
\end{split}
\]
where $c :=  \min_{(x,y)}\left\lbrace u \int_{B_{(x,y)}} \pi(\ty|x) \mu_y(d\ty)  + v \int_{A_{(x,y)}}\pi(\tx | y) \mu_x(d\tx) \right\rbrace > 0$.
It follows that for all $(x,y) \in D_d$, $(x,y)$ is also in $C_{d/c}$ since
\[
V(x,y) \le \frac{\tilde{V}_1(x,y)}{c} \le \frac{d}{c} \; .
\]
Thus, $D_d \subseteq C_{d/c}$ and the boundedness of $D_d$ follows from the boundedness of $C_{d/c}$.  The proof that $\tilde{V}_2$ is unbounded off compact sets is similar, thus eliminated here.

\end{proof}

\subsection{Proof of Lemma \ref{lem:feller}}

We prove here that CGS is Feller.  The proofs for RQGS and RSGS are similar, thus eliminated.  First, 
\eqref{eq:c} guarantees that for any $(x,y), (x_n,y_n), (x_n',y_n') \in \mathsf{X}\times\mathsf{Y}$
\[
\begin{split}
 \liminf_{(x_n,y_n) \to (x,y)}  k_{CGS}((x_n,y_n),(x_n',y_n')) & =  \liminf_{(x_n,y_n) \to (x,y)}   \pi(x_n'|y_n) \pi(y_n'|x_n') \\
 &  \ge  \pi(x_n'|y) \pi(y_n'|x_n') \\
 & = k_{CGS}((x,y),(x_n',y_n')) \; .\\
 \end{split}
\]
Thus the CGS is Feller since for any open set $O \in \mathcal{B}$, an application of Fatou's Lemma shows that
\[
\begin{split}
\liminf_{(x_n,y_n) \to (x,y)} P_{CGS}((x_n,y_n), O) 
& = \liminf_{(x_n,y_n) \to (x,y)}  \iint\limits_{O} k_{CGS}((x_n,y_n),(x_n',y_n'))\mu_x(dx_n')\mu_y(dy_n') \\
& \ge  \iint\limits_{O} \liminf_{(x_n,y_n) \to (x,y)}  k_{CGS}((x_n,y_n),(x_n',y_n'))\mu_x(dx_n')\mu_y(dy_n') \\
& \ge  \iint\limits_{O} k_{CGS}((x,y),(x_n',y_n'))\mu_x(dx_n')\mu_y(dy_n') \\
& = P_{CGS}((x,y), O)  \; . \\
\end{split}
\]

\subsection{Proof of Theorem \ref{thm:Q}}

Geometric ergodicity of RQGS with sequence selection probability $q$ guarantees the existence of drift function $V: \mathsf{X} \times \mathsf{Y} \to [1, \infty)$, $\lambda \in (0,1)$, and finite constant $b > 0$ such that $V$ is unbounded off compact sets and 
\begin{equation}\label{proof:Q1}
\begin{split}
 P_{RQGS, q}V(x,y)
& = qE[E[V(x',y')|x']| y ] +  (1-q)E[E[V(x',y')|y']| x ]  \le \lambda V(x,y) + b 
\end{split}
\end{equation}
where the equality follows from Lemma \ref{lem:operator}.
Without loss of generality, assume $\lambda > \max\{q,1-q\}$ since if \eqref{proof:Q1} holds for $\lambda \le \max\{q,1-q\}$, it must also hold for $\lambda > \max\{q,1-q\}$.

To establish geometric ergodicity for CGS, define
\[
\begin{split}
g(x) & = E[E[V(x',y')|y']| x ] \\
h(y) & = E[E[V(x',y')|x']| y ] \\
z(y) & = E[V(x', y)|y] \\
\end{split}
\]
and constants $v$ and $w$ such that 
\[
\frac{\lambda(\lambda - q)}{q^2} < w < \frac{\lambda(1-q)}{q^2}
\;\;\;\; \text{ and } \;\;\;\;
\frac{\lambda(1-q)}{q^2} - w < v < \frac{\lambda(1-q) - (\lambda  -(1-q))qw}{\lambda q} \; .
\]
Also, define $\tilde{V}(x,y) =  v g(x) + h(y) + wz(y) \ge 1$.  By Lemma \ref{lem:compact}, $\tilde{V}$ is unbounded off compact sets.  Thus, geometric ergodicity of the CGS will follow from establishing the CGS drift condition
\[
P_{CGS}\tilde{V}(x,y) \le \tilde{\lambda} \tilde{V}(x,y) + \tilde{b} 
\]
for $\tilde{b} = b(v+w)/(1-q)$ and 
\[
 \max \lb \frac{q+\lambda}{q} - (v+w) \frac{q}{1-q}, \; \; \frac{\lambda}{w}\left( \frac{v+w}{1-q} -\frac{1}{q} \right)  \rb  < \tilde{\lambda} < 1 \;.
\]
To this end,  first notice that the RQGS drift condition guarantees 
\[
qh(y) + (1-q)g(x) = P_{RQGS, q}V(x,y) \le \lambda V(x,y) + b \; .
\]  
In conjunction with Lemma \ref{lem:operator}, it follows that 
\[
\begin{split}
 P_{CGS}g(x)= E[E[g(x')|x'] | y] 
 & = E[g(x') | y] \\
 & = \frac{1}{1-q}E \left[(1-q)g(x') +  qh(y)  \; \vline \; y  \right] - \frac{q}{1-q}h(y) \\
 & \le \frac{1}{1-q}E [\lambda V(x',y) + b |y] - \frac{q}{1-q}h(y)\\
 & =  \frac{\lambda}{1-q} z(y) -   \frac{q}{1-q}h(y) + \frac{1}{1-q}b \; ,\\
 \end{split}
\]
\[
\begin{split}
 P_{CGS}h(y)
 & = E[E[h(y')|x'] | y] \\
 & = \frac{1}{q}E\left[  E \left[qh(y') +  (1-q)g(x')  |x'\right] \; \vline \; y  \right] -  \frac{1-q}{q}E\left[E \left[ g(x') | x'\right] \; \vline \; y  \right] \\
 & \le \frac{1}{q}E\left[  E [\lambda V(x',y') + b|x']  \; \vline \; y  \right]  -  \frac{1-q}{q}E\left[E \left[ g(x') | x'\right] \; \vline \; y  \right]  \\
 & =  \frac{\lambda}{q} h(y) -  \frac{1-q}{q}P_{CGS}g(x)  +  \frac{1}{q}b\\
 \end{split}
\]
and
\[
\begin{split}
 P_{CGS}z(y) 
 & = E[E[z(y')|x'] | y] \\
 & = E[E[ E[V(\tx, y')|y']|x'] | y] \\ 
 & = E[g(x') | y] \\ 
 & = E[E[g(x') |x']| y] \\ 
 & = P_{CGS}g(x) \; . \\
 \end{split}
\]
Thus
\[
\begin{split}
 P_{CGS}\tilde{V}(x,y)
 & = v P_{CGS}g(x) + P_{CGS}h(y) + w P_{CGS}z(y) \\
 & \le \left(v + w - \frac{1-q}{q}\right) P_{CGS}g(x) + \frac{\lambda}{q} h(y) + \frac{1}{q}b \\
 & \le  \left( \frac{q+\lambda}{q} - (v+w) \frac{q}{1-q}\right) h(y)  +  \frac{\lambda}{w}\left( \frac{v+w}{1-q} -\frac{1}{q} \right) wz(y) + \frac{v+w}{1-q}b \\
 & \le \tilde{\lambda}(h(y) + wz(y)) + \tilde{b} \\
 & \le \tilde{\lambda} \tilde{V}(x,y)  + \tilde{b} \\
 \end{split} 
\]
and the result holds.

\subsection{Proof of Theorem \ref{thm:S}}

Geometric ergodicity of RSGS with component selection probability $p$ guarantees the existence of drift function $V: \mathsf{X} \times \mathsf{Y} \to [1, \infty)$, $\lambda \in (0,1)$, and finite constant $b > 0$ such that $V$ is unbounded off compact sets and 
\[
\begin{split}
 P_{RSGS, p}V(x,y)
& = pE[V(x',y)|y] + (1-p)E[V(x,y')|y] \le \lambda V(x,y) + b 
\end{split}
\]
where the equality holds from Lemma \ref{lem:operator}.
Without loss of generality, we assume $\lambda > \max\{p,1-p\}$ (see the proof of Theorem \ref{thm:Q}).

To establish geometric ergodicity for CGS, define
\[
g(x)  = E[ V(x,y') |x] 
\hspace{.25in} \text{ and } \hspace{.25in} 
h(y)  = E[V(x',y)|y]
\]
and constant 
\[
v > \frac{p(\lambda-p)}{\lambda(1-\lambda)} \; .
\]
Also, define $\tilde{V}(x,y) = g(x) + vh(y) \ge 1$.  It follows from Lemma \ref{lem:compact} that $\tilde{V}$ is unbounded off compact sets.   We will also show that $\tilde{V}$ satisfies the CGS drift condition 
\[
P_{CGS}\tilde{V}(x,y) \le \tilde{\lambda} \tilde{V}(x,y) + \tilde{b} 
\]
for
 \[
\frac{\lambda -p}{v(1-p)} + \frac{(\lambda -p)(\lambda+p-1)}{p(1-p)} \le \tilde{\lambda} < 1
 \]
and $\tilde{b} = (\lambda v + p)/(p(1-p))$.
Geometric ergodicity of the CGS will follow.

First, notice that the RSGS drift condition guarantees
\[
ph(y) + (1-p)g(x) = P_{RSGS, p}V(x,y) \le \lambda V(x,y) + b \; .
\]  
Thus 
\[
\begin{split}
 P_{CGS}g(x)  = E[E[g(x')|x'] | y] 
 & = E[ g(x') | y] \\
 & =  \frac{1}{1-p}E\left[ (1-p)g(x') + ph(y) \; \vline \; y  \right]  - \frac{p}{1-p}h(y) \\ 
 & \le  \frac{1}{1-p}E\left[\lambda  V(x',y) +b \; \vline \; y  \right] - \frac{p}{1-p}h(y) \\  
 & =  \frac{\lambda -p}{1-p}h(y)    + \frac{b}{1-p} \\  
 \end{split}
\]
and
\[
\begin{split}
 P_{CGS}h(y) = E[E[h(y')|x'] | y] 
 & = E\left[\frac{1}{p}E\left[ph(y') + (1-p)g(x')\;\vline\;x'\right]  - \frac{1-p}{p}g(x') \; \vline \; y\right] \\
 & \le E\left[  \frac{1}{p}E \left[\lambda V(x',y') + b \; \vline \; x'\right] - \frac{1-p}{p}g(x') \; \vline \; y  \right] \\
 & =   \frac{\lambda + p-1}{p} E\left[ g(x') \; \vline \; y  \right] + \frac{b}{p}\\ 
 & =   \frac{\lambda +p-1}{p(1-p)} E\left[ (1-p)g(x') + ph(y) \; \vline \; y  \right]  - \frac{\lambda +p-1}{1-p}h(y)+ \frac{b}{p}\\ 
 & \le \frac{\lambda +p-1}{p(1-p)}E\left[ \lambda V(x',y) + b \; \vline \; y  \right] -  \frac{\lambda +p-1}{1-p}h(y)  + \frac{b}{p}\\  
 & =  \frac{(\lambda -p)(\lambda+p-1)}{p(1-p)}h(y)    + \frac{\lambda}{p(1-p)}b \; .\\  
 \end{split}
\]
Combining these results establishes the CGS drift condition:
\[
\begin{split}
 P_{CGS}\tilde{V}(x,y) & = P_{CGS}g(x) + vP_{CGS}h(y)\\
 & \le \left[\frac{\lambda -p}{v(1-p)} + \frac{(\lambda -p)(\lambda+p-1)}{p(1-p)}\right] vh(y) + \left(\frac{\lambda v+p}{p(1-p)}\right)  b \\
 & \le \left[\frac{\lambda -p}{v(1-p)} + \frac{(\lambda -p)(\lambda+p-1)}{p(1-p)}\right] [g(x) + vh(y)] + \left(\frac{\lambda v+p}{p(1-p)}\right)  b \\
  & \le \tilde{\lambda} \tilde{V}(x,y) + \tilde{b} \; . \\
 \end{split}
\]

\subsection{Proof of Theorem \ref{thm:recipes}}

First, consider the CGS Markov chain $\Phi := \lb \left(X^{(0},Y^{(0)} \right), \left(X^{(1},Y^{(1)} \right), \left(X^{(2},Y^{(2)} \right), \ldots \rb$ and its $y$ {\it sub-chain} $\Phi_y:= \lb Y^{(0)},Y^{(1)},Y^{(2)}, \ldots \rb$ with Mtd and Markov kernel
\[
\begin{split}
k_{CGS,y}(y,y') & = \int \pi(x'|y)\pi(y'|x') \mu_x(dx') =  \int k_{CGS}((x,y),(x',y'))\mu_x(dx') \\
P_{CGS,y}(y, A) & = \int_A k_{CGS,y}(y,y')\mu_y(dy') = P_{CGS}((x,y), \mathsf{X}\times A) \; . \\
\end{split}
\] 
Notice that for any $g:\mathsf{Y} \to \mathbb{R}$,
\[
\begin{split}
P_{CGS,y}g(y) & = \int g(y') k_{CGS,y}(y,y') \mu_y(dy') \\
& = \int \int g(y') k_{CGS}((x,y),(x',y')) \mu_x(dx')\mu_y(dy') \\
& = P_{CGS}g(y) \; .\\
\end{split}
\]
It is also well known that, in this two-component setting, if $\Phi_y$ is geometrically ergodic, so is $\Phi$ (Roberts and Rosenthal (2001)). 
Thus, it suffices to establish geometric ergodicity for $\Phi_y$.  To this end, let $V_{CGS}$, $\lambda_{CGS}$, and $b_{CGS}$ be as defined in Lemma \ref{lem:recipes}.  Then the following drift condition holds for both the CGS and its $y$ sub-chain:
\[
\begin{split}
P_{CGS,y}g(y) = P_{CGS}g(y) 
& = E[E[g(y')|x']|y] \\
& \le E[mf(x') + n|y] \\
& \le jmg(y) + (mk + n) \\
& = \lambda_{CGS} g(y) + b_{CGS} 
\end{split}
\]
where $g(y) = V_{CGS}(x,y)$.
Further, by the assumption that $C_d := \{y: g(y) \le d\}$ is compact for all $d>0$, $g(y)$ is unbounded off compact sets for $\Phi_y$.  Thus $\Phi_y$ (and $\Phi$) is geometrically ergodic.

In conjunction with Theorem \ref{thm:all}, geometric ergodicity of the CGS guarantees the same for RQGS and RSGS.  Though unnecessary, it is still interesting to note that the conditions of Theorem \ref{thm:recipes} can be used to construct drift conditions for RQGS and RSGS.  To this end,  let $V_{RQGS}$, $\lambda_{RQGS}$, $b_{RQGS}$, and $v = v_{RQGS,q}$ be as defined in Lemma \ref{lem:recipes}. Then the following RQGS drift condition holds:
\[
\begin{split}
P_{RQGS,q}V_{RQGS}(x,y) & = qE[E[V_{RQGS}(x',y')|x']|y] + (1-q)E[E[V_{RQGS}(x',y')|y']|x] \\
& = qE[E[f(x') + vg(y')|x']|y] + (1-q)E[E[f(x') + vg(y')|y']|x] \\
& \le qE[(1+vm)f(x') + vn|y] + (1-q)E[(j+v)g(y')+k|x] \\
& \le (1-q)(j+v)mf(x) + \frac{qj(1+vm)}{v}vg(y)  + b_{RQGS} \\
& = \lambda_{RQGS} V_{RQGS}(x,y) + b_{RQGS} \\
\end{split}
\]
where the final equality holds since $v$ is a solution to $(1-q)(j+v)m = qj(1+vm)/v$.
Further, defining $V_{RSGS}$, $\lambda_{RSGS}$, $b_{RSGS}$, and $v = v_{RSGS,p}$ by Lemma \ref{lem:recipes} produces the following RSGS drift condition:
\[
\begin{split}
P_{RSGS,p}V_{RSGS}(x,y) & = pE[V_{RSGS}(x',y)|y] + (1-p)E[V_{RSGS}(x,y')|x] \\
& = pE[f(x') + vg(y)|y] + (1-p)E[f(x) + vg(y')|x] \\
& \le (1-p)(1+vm)f(x) + \frac{p(j+v)}{v}vg(y) + pk + (1-p)vn\\
& = \lambda_{RSGS} V_{RSGS}(x,y) + b_{RSGS} \\
\end{split}
\]
where the final equality holds since $v$ is a solution to $(1-p)(1+vm) = p(j+v)/v$.

%



\bibliographystyle{spbasic}      

\bibliography{mcref}

\end{document}